%% file: CDC26WorkingVersion.tex
\documentclass[letterpaper, 10 pt, conference]{ieeeconf}  

\IEEEoverridecommandlockouts                              
    
\makeatletter
\g@addto@macro\normalsize{
  \setlength{\abovedisplayskip}{6pt}
  \setlength{\belowdisplayskip}{4pt}
  \setlength{\abovedisplayshortskip}{0pt}
  \setlength{\belowdisplayshortskip}{3pt}
}
\makeatother

\usepackage{titlesec}
\titlespacing*{\section}{0pt}{1ex}{0.5ex}
\titlespacing*{\subsection}{0pt}{0.8ex}{0.4ex}

\usepackage[skip=4pt]{caption} 
\usepackage{graphicx} 
\usepackage{amsmath,amsfonts,amssymb}
\usepackage{booktabs}
\usepackage{multicol}
\usepackage{multirow}
\usepackage{xcolor}
\usepackage[commentmarkup=uwave,]{changes}
\definechangesauthor[color=purple]{RW}
\definechangesauthor[color=violet]{AC}
\definechangesauthor[color=blue]{ES}
\definechangesauthor[color=red!70!black]{IM}

\newcommand{\R}{\mathbb{R}}

\newtheorem{lemma}{Lemma}
\newtheorem{theorem}{Theorem}
\newtheorem{corollary}{Corollary}
\newtheorem{proposition}{Proposition}
\newtheorem{definition}{Definition}

\newtheorem{claim}{Claim}

\input{MathDefs}

\newcommand{\cdcsubmission}[1]{#1}
\newcommand{\extendedversion}[1]{#1}
\renewcommand{\cdcsubmission}[1]{}

\title{\LARGE \bf
Remarks on Lipschitz-Minimal Interpolation:\\Generalization Bounds and Neural Network Implementation
}

\author{Arthur C. B. de Oliveira$^{1}$, Ruigang Wang$^{2}$,  Ian R. Manchester$^{2}$,  Eduardo D. Sontag$^{1}$
\thanks{*This research was partially supported by ONR grant N00014-21-1-2431, AFOSR grant FA9550-2110289, and the Australian Research Council through projects DP230101014 and IH210100030.}
\thanks{$^{1}$\mbox{Depts of Electrical \& Computer Engineering and BioEngineering,} \mbox{Northeastern University, Boston, Massachusetts, USA.}
        {\tt\small castello.a@northeastern.edu, e.sontag@northeastern.edu}}%
\thanks{$^{2}$\mbox{Australian Centre for Robotics and School of Aerospace, Mechanical} \mbox{and Mechatronic Engineering, University of Sydney, Australia.}
        {\tt\small ruigang.wang@sydney.edu.au, ian.manchester@sydney.edu.au}}%
}

\begin{document}

\maketitle
\thispagestyle{empty}
\pagestyle{empty}

\begin{abstract}
    This note establishes a theoretical framework for finding (potentially overparameterized) approximations of a function on a compact set with a-priori bounds for the generalization error. The approximation method considered is to choose, among all functions that (approximately) interpolate a given data set, one with a minimal Lipschitz constant. The paper establishes rigorous generalization bounds over practically relevant classes of approximators, including deep neural networks. It also presents a neural network implementation based on Lipschitz-bounded network layers and an augmented Lagrangian method. The results are illustrated for a problem of learning the dynamics of an input-to-state stable system with certified bounds on simulation error.
\end{abstract}

\section{Introduction}

Modern machine learning practice is characterized by the training of large models, such as deep neural networks, via simple first-order optimization algorithms. A remarkable fact is that highly over-parameterized models -- i.e. models with many more parameters than are required for perfect interpolation of the training data -- frequently exhibit impressive generalization performance, seemingly contradicting the classical bias-variance trade-off and leading to a so-called ``double-descent'' phenomenon \cite{belkin2019reconciling, nakkiran2021deep}.

 One explanation for this surprising observation is a form of Occam's razor: to generalize well one should choose, among all interpolating models, the \textit{smoothest} model in some sense. In \cite{belkin2021fit} smoothness was considered in relation to kernel methods via Sobolev-like norms and it was argued that over-parameterization generally enriches the model set thus admitting smoother models, and some intrinsic properties of the model class and/or the learning algorithm  lead to an inductive bias towards smoother models.

While large models often generalize well in a statistical sense, it has been observed that deep neural networks can be very sensitive to \textit{adversarial} (worst-case) perturbations \cite{szegedy2013intriguing}, i.e. they can have very large Lipschitz constants. This lack of robustness raises questions about the suitability of NNs for use in safety-critical settings such as control systems \cite{manchester2026neural} and has led to a large body of research on novel training methods and models with certified Lipschitz bounds to improve robustness, see e.g. \cite{miyato2018spectral, virmaux2018lipschitz, trockman2021orthogonalizing, wang2023direct, araujo2023unified, manchester2026neural}.

Motivated by these observations, in this paper we consider the problem of learning an interpolating model which is \textit{explicitly} optimized to be the smoothest in the sense of \textit{minimal Lipschitz constant}. This problem connects to classical results on the extension of functions: in 1934 Kirszbraun proved a function defined on a subset of a Hilbert space (e.g. a finite data set) can be extended to the whole Hilbert space without increasing its Lipschitz constant \cite{kirszbraun1934zusammenziehende}. While the original proof was non-constructive, more constructive formulations have been found \cite{azagra2021kirszbraun} and recent work has developed an efficient online algorithm for computing an extension \cite{zaichyk2024efficient}.

In this paper, we consider learning an interpolator from a parameterized set of  models  such as deep neural networks.
In such a case, recent theoretical work suggests that significant overparameterization is \textit{necessary}: interpolating with a small Lipschits constant requires many more parameters than interpolation without any smoothness condition \cite{bubeck2021universal}.


The first contribution of this paper is a set of theoretical results on Lipschitz-minimal interpolators, and approximate versions thereof. This work shows that such models have generalization bounds that are linear in model approximation error and the distance between test points and the training data, showing directly how richer models and more data lead to improved generalization.

Our main result, Theorem~\ref{thm:HighProbpUniDist}, shows that for $N$ i.i.d.\ samples $\{x_i,g(x_i)\}_{i=1}^N$ drawn uniformly from $[0,1]^n$, where $g$ is Lipschitz, the generalization error decays at rate $(\log(N)/ N)^{1/n}$ with high probability, provided that the hypothesis class $\FSet$ is dense in the $C^{0,1}$ topology (equivalently, under the norm $\|f\|_\infty + \|f'\|_\infty$, where $f'$ denotes the Jacobian of $f$), as is the case for deep neural networks.
Specifically, there exists at least one $f \in \FSet$ with sufficiently small Lipschitz constant that nearly interpolates the data, and any such near-interpolant satisfies a uniform loss bound
\[
\Loss(f,g) = O\left(\frac{\log N}{N}\right)^{1/n}
\]
with high probability.



A second contribution of the paper is to investigate neural network implementations of the proposed strategy. We develop specific algorithms for fitting with (near)-minimal Lipschitz constants, based on an augmented Lagrangian scheme and recently developed Lipschitz network layers \cite{tsuzuku2018lipschitz, trockman2021orthogonalizing, wang2023direct}. We illustrate the approach on a simple problem of fitting an input-to-state stable vector field, leading to theoretical guarantees on simulation error.

\section{Theoretical Setup and Preliminaries}
\subsection{Preliminary Definitions}
Let $\na,\ir,\re,\rep,\repp$ be the sets of natural, integer, real, nonnegative real, and positive real numbers, respectively. 

We denote by $\dist:\re[n]\times\re[\SDim]\rightarrow\rep$ the Euclidean distance between two points in $\re[\SDim]$. 
The distance between a point $x\in\re[\SDim]$ and a set $\xsp\subset\re[\SDim]$ is 
$\dist(x,\xsp):=\inf_{\bar x\in\xsp}\dist(x,\bar x)$. 
For a function $\func:\re[\SDim]\rightarrow\re$ and a set $\xsp\subset\re[\SDim]$, we call the image of $\func$ on $\xsp$
\[
\func[\xsp]:=\{f(x)\in\re~|~x\in\xsp\}.
\]

For a compact set $\SSpc\subset\re[\SDim]$, let $C(\SSpc)$ and $\LSpace\subset C(\SSpc)$ denote the sets of continuous, respectively Lipschitz, functions $\func:\SSpc\rightarrow\re$.
The \textit{Lipschitz semi-norm} of a function $\func\in C(\SSpc)$ is 
\begin{align*}
    \Lip(\func):=&\inf\{\gamma\in\rep\cup\infty~|~\\&\|f(x_1)-f(x_2)\|\leq\gamma\|x_1-x_2\|,~\forall x_1,x_2\in\SSpc\}.
\end{align*}
Thus elements of $\LSpace\subset C(\SSpc)$ are those with 
$\Lip(\func)<\infty$.
Finally, we denote by $C^1(\SSpc)$ the set of continuously differentiable (and hence Lipschitz if $\SSpc$ is compact) functions on $\SSpc$.


Let $\SSpc\subset\re[\SDim]$ be a compact set.
Given a function $\funcg:\SSpc\rightarrow\re$ and any $\er\geq0$, an \emph{$\er$-noisy dataset generated by $\funcg$} is any set of points
$$
\nDSet[\funcg]:=\{(x_1,y_1),\dots,(x_\NSet,y_{\NSet})\}\subset\SSpc\times \re
$$
such that
$$
\|\fgen(x_i)-y_i\|\leq\er,
$$
for all $i=1,2,\dots,\NSet$. 
In the special case $\er=0$, we write simply $\DSet[\funcg]$ and call this a (perfect)
\emph{dataset associated to} $\funcg$. Note that then
$$
    \DSet[\funcg]:=\{(x_1,\funcg(x_1)),\dots,(x_\NSet,\funcg(x_{\NSet}))\}\subset\SSpc\times \re.
$$
We call $\NSet$ the \textit{size} of the set, and
any such function $\funcg$ is called a \textit{generator} of the dataset, and denote by $\xsp:=\{x_1,\dots,x_{\NSet}\}$ and $\ysp:=\{y_1,\dots,y_{\NSet}\}$ the (arbitrarily) ordered sets of inputs and respective outputs in $\DSet$.


Given two functions $\fgen,\func:\SSpc\to\re$, we denote the sup error or ``$L_\infty$-loss" as
\begin{equation}
    \Loss(f,g) := \max_{x\in\SSpc}\|g(x)-f(x)\| =: \|g-f\|_{\infty,\SSpc}.
\end{equation}

\subsection{Initial problem statement}

Suppose that there is an unknown function $\fgen:\SSpc\to\re$, $\fgen\in \LSpace$ thought of as a data generator,
and we wish to find, based on available data, an approximating 
function $\func:\SSpc\to\re$, $\func\in \LSpace$.
If complete information were available,
the problem of finding $\fgen$ can be written as solving for
\begin{equation}
    \label{eq:origiopt}
    \argmin_{f\in \LSpace}\Loss(f,g)=\argmin_{f\in \LSpace}\max_{x\in\SSpc}\|g(x)-f(x)\|.
\end{equation}
Obviously, $f=\fgen$ is the unique solution of this optimization problem. Of course, in practice one can only sample and evaluate $\fgen$ on finitely many points from $\SSpc$. This results in only a dataset $\DSet$ if the process is noise free, or more generally $\nDSet$. Given such a dataset, and the goal of estimating $\fgen$, we define the empirical loss minimization problem as
\begin{equation}
    \label{eq:sampprob}
    \argmin_{f\in \LSpace}\loss(\nDSet,\func).
\end{equation}
where the training loss is given by
\begin{equation}
    \label{eq:losssample}
    \loss(\nDSet,\func):=\max_{(x_i,y_i)\in\nDSet}\|y_i-f(x_i)\|.
\end{equation}
Contrary to \eqref{eq:origiopt}, \eqref{eq:sampprob} does not have a unique solution in general. Moreover, there is no a priori guarantee that any solution of \eqref{eq:sampprob} will even be ``close'' to $\fgen$ outside of the training points. Despite that, we can still state the following elementary, but key result on the generalization error of a solution of \eqref{eq:origiopt}. 
\smallskip
%
%
\begin{lemma}

    \label{lem:UBOrigiAnyPt}
    Given $\fgen\in\LSpace$, let $\nDSet$ be a noisy dataset with noise bound $\er$. Furthermore, for any $\eps>0$ let $\fopt$ be any function in $\LSpace$ that satisfies $\loss(\nDSet,\fopt)\leq \eps$, with $\lc{\fgen}$ and $\lc{\fopt}$ being the Lipschitz constants of $\fgen$ and $\fopt$ respectively. Then for any $x\in\SSpc$
    \begin{equation}
        \|\fgen(x)-\fopt(x)\|\leq (\lc{\fopt}+\lc{\fgen}) \dist(x,\xsp)+\er+\varepsilon
    \end{equation}
%
%
%
 %
\end{lemma}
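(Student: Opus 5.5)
The plan is a triangle inequality through the nearest training point. Fix $x\in\SSpc$. Since $\xsp$ is a finite set, the infimum defining $\dist(x,\xsp)$ is attained. So we can pick an index $i$ with $\|x-x_i\| = \dist(x,\xsp)$. This attainment is the only place finiteness of the dataset is used. If one preferred to avoid it, taking $x_i$ within $\delta$ of the infimum and letting $\delta\to 0$ gives the same bound.

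We then split the error at $x$ as
\begin{align*}
\|\fgen(x)-\fopt(x)\| \leq{}& \|\fgen(x)-\fgen(x_i)\| + \|\fgen(x_i)-y_i\| \\
&+ \|y_i-\fopt(x_i)\| + \|\fopt(x_i)-\fopt(x)\| ,
\end{align*}
inserting $\fgen(x_i)$, $y_i$ and $\fopt(x_i)$. Each of the four terms is bounded directly by a hypothesis.

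The first term is at most $\lc{\fgen}\|x-x_i\|$, by the definition of $\Lip$ on $\SSpc$; both $x$ and $x_i$ lie in $\SSpc$. The last term is at most $\lc{\fopt}\|x-x_i\|$ for the same reason. The second term is at most $\er$, by the definition of an $\er$-noisy dataset. The third term is at most $\loss(\nDSet,\fopt)\leq\eps$, since the training loss is a maximum over all data pairs.

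Adding these four bounds and substituting $\|x-x_i\|=\dist(x,\xsp)$ gives the claimed inequality. There is no real obstacle here. The only care needed is that the Lipschitz inequality holds with constant exactly $\Lip(\cdot)$, since it is defined as an infimum. This follows by passing to the limit in $\|f(x_1)-f(x_2)\|\leq\gamma\|x_1-x_2\|$ as $\gamma\downarrow\Lip(f)$.
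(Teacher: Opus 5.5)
Your proposal is correct and is essentially the paper's own proof: you insert $\fgen(x_i)$, $y_i$ and $\fopt(x_i)$ for the nearest sample $x_i$, which satisfies $\|x-x_i\|=\dist(x,\xsp)$. You then bound the four terms by $\lc{\fgen}\|x-x_i\|$, $\er$, $\eps$ and $\lc{\fopt}\|x-x_i\|$. Your two extra remarks are valid refinements that the paper leaves implicit: the minimum is attained because $\xsp$ is finite, and the Lipschitz inequality holds with constant exactly $\Lip(\cdot)$.
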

\begin{proof}
    For any $x\in\SSpc$ it holds that
    \begin{align*}
        \|\fgen(x)&-\fopt(x)\|\\&= \|\fgen(x)-\fgen(x_i)+\fgen(x_i)-y_i\\&~~~~~~+y_i-\fopt(x_i)+\fopt(x_i)-\fopt(x)\|\\
        &\leq \|\fgen(x)-\fgen(x_i)\|+\|\fgen(x_i)-y_i\|\\&~~~~~~+\|y_i-\fopt(x_i)\|+\|\fopt(x_i)-\fopt(x)\|\\
        &\leq \lc{\fgen}\|x-x_i\|+\er+\varepsilon +\lc{\fopt}\|x-x_i\|,
    \end{align*}
    where $i:=\argmin_{i=1,\dots, N}\|x-x_i\|$ implying $\|x-x_i\|=\dist(x,\xsp)$, and completing the proof.
\end{proof}

{To strengthen this result into a general guarante on $\SSpc$, we must first assume that the sample $\xsp$ is a ``good representation'' of $\SSpc$, a notion that the \emph{fill distance/covering radius} formalizes:}

\begin{definition}[Fill Distance/Covering Radius \cite{reznikov2016covering}]
    Let $\SSpc\subset\mathbb{R}^n$ be compact subset with non-empty interior, and $\xsp:=\{x_1,\dots,x_N\}\subset\SSpc$. Then the \emph{fill distance/covering radius} of $\xsp$ with respect to $\SSpc$ is given by
    \begin{equation}
        \label{eq:defcovrad}
        h(\xsp,\SSpc):=\sup_{x\in\SSpc}\dist(x,\xsp).
    \end{equation}
\end{definition}
\smallskip
With this definition, a ``good representation'' is formalizd as a sample set whose covering radius is small enough. Using this, we can provide the following Corollary of Lemma \ref{lem:UBOrigiAnyPt}:

\begin{corollary}
    \label{cor:UBOrigiForm}
    Given $\fgen\in \LSpace$, let $\nDSet(\fgen)$ be a noisy dataset with noise bound $\er$ and covering radius $h(\xsp,\SSpc)=h>0$. Furthermore, for any $\varepsilon>0$, let $\fopt$ be any function in $\LSpace$ that satisfies $\loss(\nDSet,\fopt)\leq \eps$, with $\lc{\fgen}$ and $\lc{\fopt}$ being the Lipschitz constants of $\fgen$ and $\fopt$ respectively. Then
    \begin{equation}
        \Loss(\fopt,\fgen)\leq (\lc{\fopt}+\lc{\fgen}) h+\er+\varepsilon
    \end{equation}
\end{corollary}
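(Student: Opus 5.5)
The plan is to apply Lemma~\ref{lem:UBOrigiAnyPt} pointwise and then take the supremum over $\SSpc$. Its hypotheses are exactly those of the corollary: $\fgen\in\LSpace$, a noisy dataset $\nDSet$ with noise bound $\er$, and a function $\fopt\in\LSpace$ with $\loss(\nDSet,\fopt)\leq\eps$. The lemma therefore gives, for every $x\in\SSpc$,
\[
\|\fgen(x)-\fopt(x)\|\leq (\lc{\fopt}+\lc{\fgen})\,\dist(x,\xsp)+\er+\eps .
\]

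Next I bound the only $x$-dependent quantity on the right. By the definition of the covering radius in \eqref{eq:defcovrad}, $\dist(x,\xsp)\leq \sup_{\bar x\in\SSpc}\dist(\bar x,\xsp)=h(\xsp,\SSpc)=h$ for every $x\in\SSpc$. Since $\lc{\fopt}+\lc{\fgen}\geq 0$, multiplying preserves the inequality, and the right-hand side becomes the constant $(\lc{\fopt}+\lc{\fgen})h+\er+\eps$, independent of $x$.

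Finally, I take the maximum over $x\in\SSpc$ on the left, which is $\Loss(\fopt,\fgen)$. The maximum is attained because $\SSpc$ is compact and $\fgen-\fopt$ is continuous, both functions being Lipschitz. The symmetry $\|\fgen(x)-\fopt(x)\|=\|\fopt(x)-\fgen(x)\|$ reconciles the argument order with the definition of $\Loss$.

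There is no real obstacle here. The only points needing care are that $\sup_{x}\dist(x,\xsp)$ is finite (true since $\SSpc$ is compact) and that $\Loss$ is well defined as a maximum. The hypothesis $h>0$ plays no role in the argument.
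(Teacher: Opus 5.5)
Your proof is correct and is the same argument as the paper's, which says only that the corollary follows from Lemma~\ref{lem:UBOrigiAnyPt} together with the definition of the covering radius: bound $\dist(x,\xsp)\leq h$ pointwise, then take the maximum over $\SSpc$. Your remarks that the maximum is attained by compactness and continuity, and that $h>0$ is not used, fill in details the paper leaves implicit.
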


\begin{proof}
    This corollary follows immediately from Lemma \ref{lem:UBOrigiAnyPt} and the definition of covering radius \eqref{eq:defcovrad}.
\end{proof}

Although a finite upper-bound on the value of $\Loss(\fgen,\fopt)$, the result in Corollary \ref{cor:UBOrigiForm} can be arbitrarily large for any arbitrarily small $\varepsilon>0$. In other words, given $\SSpc$, $\DSet$, and any $\delta>0$, there always exists an $\fopt$ that solves \eqref{eq:sampprob} -- and therefore satisfies the conditions in the lemma with $\varepsilon=0$ -- and that satisfies $\Lip(\fopt)>\delta$. 

This is well known in the context of polynomial fitting, and can be illustrated as follows: Given a dataset $\DSet$ build a new dataset by adding a new point $(x_{N+1},y_{N+1})$ as follows: pick any $x_{N+1}\in\SSpc$, and let $\bar i:=\argmin_{i=1,\dots,N}\|x_i-x_{N+1}\|$; then pick $y_{N+1}$ such that $\|y_{N+1}-y_{\bar i}\|\geq\delta\|x_{\bar i}-x_{N+1}\|$. For this new dataset, pick a function $\func$ that perfectly interpolates it. This function also perfectly interpolates $\DSet$ and has Lipschitz constant $\Lip(\func)\geq \delta$, by construction.

This example illustrates the issues with this approach while also indicating a possible path for correction: choosing approximating functions with small Lipschitz constants. We next look at an alternative formulation to the problem that leverages this intuition.

\section{Lipschitz-Minimal Interpolation}

Including Lipschitz constraints is not a new idea in machine learning, however in this section we will investigate a priori theoretical guarantees for solutions of \textit{Lipschitz-minimal} approximations. Given a noisy dataset $\nDSet$ and an error measure $\varepsilon\geq\er$, consider the following problem
\begin{subequations}
    \label{eq:lipregprob}
    \begin{align}
        \underset{\func \in \LSpace}{\textrm{minimize}} \quad &\Lip[\func]\label{eq:lip-cost}\\
        \text{subject to}\quad&\loss(\nDSet,\func)\leq \varepsilon.\label{eq:lipregprob-constraint}
    \end{align}
\end{subequations}

%
Notice that the condition $\eps\geq\er$ is sufficient to guarantee feasibility of the problem, since $g\in L(\SSpc)$. For \eqref{eq:lipregprob} consider the following result:
\begin{theorem}
    \label{thm:ThrGuarantee1}
    For any $\fgen\in \LSpace$, let $\nDSet(\fgen)$ be a noisy dataset with noise bound $\er$ and covering radius $h(\xsp,\SSpc)=h>0$. Pick any $\varepsilon\geq \er$. Then: (i) there always exists a $\fopt\in L(\SSpc)$ that solves \eqref{eq:lipregprob}; and (ii) any $f^*$ that solves \eqref{eq:lipregprob} satisfies 
    \begin{equation}
        \Loss(\fgen,\fopt)\leq 2(\lc{\fgen} h+\varepsilon),
    \end{equation}
    with $\lc{\fgen}$ being the Lipschitz constant of $\fgen$.
\end{theorem}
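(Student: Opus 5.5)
For part (ii), the plan is to combine Corollary~\ref{cor:UBOrigiForm} with a comparison of Lipschitz constants. The generator $\fgen$ is feasible for \eqref{eq:lipregprob}, since $\loss(\nDSet,\fgen)\leq\er\leq\varepsilon$. Hence any minimizer $\fopt$ satisfies $\lc{\fopt}\leq\lc{\fgen}$. Substituting this into Corollary~\ref{cor:UBOrigiForm}, which applies because $\loss(\nDSet,\fopt)\leq\varepsilon$, gives
\[
\Loss(\fopt,\fgen)\leq (\lc{\fopt}+\lc{\fgen})h+\er+\varepsilon\leq 2\lc{\fgen}h+\er+\varepsilon\leq 2(\lc{\fgen}h+\varepsilon),
\]
where the last step uses $\er\leq\varepsilon$. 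So (ii) is immediate once (i) holds.

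For part (i), existence, I would use a compactness argument together with a reduction to the finite data.

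First I would reduce to a finite-dimensional problem. Let $\gamma^*$ be the infimum of $\Lip(f)$ over feasible $f$; it is finite because $\fgen$ is feasible. The constraint \eqref{eq:lipregprob-constraint} involves only the values $f(x_1),\dots,f(x_\NSet)$. Moreover, for any admissible value vector $z\in\re^{\NSet}$ (with $z_i$ in $\re$), McShane's extension $f(x)=\min_i\{z_i+\gamma\|x-x_i\|\}$ realizes the discrete Lipschitz constant $\gamma(z)=\max_{i\neq j}|z_i-z_j|/\|x_i-x_j\|$ on all of $\SSpc$. Therefore $\gamma^*$ equals the minimum of the continuous function $\gamma(z)$ over the compact set $\prod_i[y_i-\varepsilon,y_i+\varepsilon]$. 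I am assuming the $x_i$ are distinct; duplicates are handled by merging them or noting they impose only a consistency constraint. By Weierstrass this minimum is attained at some $z^*$, and the McShane extension of $z^*$ with $\gamma=\gamma(z^*)$ is then a solution.

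If the outputs are vector-valued, I would replace McShane's extension by Kirszbraun's theorem, which the introduction cites, to obtain the same extension without increasing the Lipschitz constant.

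As an alternative route, one could take a minimizing sequence with $\Lip(f_k)\to\gamma^*$. These functions are uniformly bounded near the data, so Arzelà--Ascoli yields a uniformly convergent subsequence. Lower semicontinuity of $\Lip$ under pointwise convergence and closedness of the constraint then finish the argument.

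The main obstacle is part (i): justifying that the infimum is attained. I expect the finite-dimensional reduction via the extension theorem to be the cleanest way to handle it.
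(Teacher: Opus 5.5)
Your part (ii) matches the paper's argument. For part (i) you take a genuinely different route.

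The paper stays in function space. It fixes a feasible $\tilde f$ and sets $\mathcal{F}_m=\{f \text{ feasible},\ \mathrm{Lip}(f)\le m\}$ with $m=\mathrm{Lip}(\tilde f)$. It shows this family is equibounded and equicontinuous, applies Arzel\`a--Ascoli, proves that $\mathrm{Lip}$ is lower semicontinuous under uniform convergence, and concludes by Weierstrass. This is essentially your ``alternative route.''

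Your main argument instead notes that the constraint and the trivial lower bound $\mathrm{Lip}(f)\ge\gamma\bigl((f(x_i))_i\bigr)$ both depend only on the values at the data points. The McShane (or Kirszbraun) extension attains this bound, so the problem collapses to minimizing a continuous function over a compact box in $\mathbb{R}^N$.

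This buys several things. It avoids Arzel\`a--Ascoli altogether, including the closedness of $\mathcal{F}_m$ that compactness actually requires. The paper leaves that closedness implicit; it does follow from lower semicontinuity together with closedness of the pointwise constraints. Your route also exhibits an explicit minimizer, and it identifies the optimal value as
\[
l_{f^*}=\min_{|z_i-y_i|\le\varepsilon}\ \max_{i\neq j}\frac{|z_i-z_j|}{\|x_i-x_j\|}.
\]
This is a finite-dimensional convex program (a linear program for scalar outputs), and it reduces to $L_{\mathrm{data}}$ when $\varepsilon=0$.

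The paper's route, in turn, needs no extension theorem. It therefore carries over unchanged to settings where a norm-preserving Lipschitz extension may not exist, such as vector-valued maps with non-Euclidean norms on the range.
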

\begin{proof}
    First, assume (i) is true to prove (ii). In that case, notice that $\fgen$ is a feasible solution of \eqref{eq:lipregprob}, which means that for any $\fopt$ that solves \eqref{eq:lipregprob}, $\Lip(\fopt)\leq\Lip(\fgen)$ necessarily. Then, from Corollary \ref{cor:UBOrigiForm} we have that
    \[
    \|\fgen(x)-\fopt(x)\|\;\leq\; (\lc{\fgen}+\lc{\fopt})h+\varepsilon+\er
    \;\leq \;2(\lc{\fgen}h+\varepsilon),
    \]
     completing the proof of (ii).

     To prove (i), first consider for any $m>0$ the set
     \[
     \mathcal{F}_m:=\{\func\in \LSpace~|~\loss(\nDSet,\func)\leq\varepsilon\, ~\text{ and } \lc{\func}\leq m\},
     \]
     which is a subset of the feasible set of \eqref{eq:lipregprob}. This set is equibounded (in the sup norm), because any $\func\in\mathcal{F}_m$ satisfies $\|\func\|_\infty\le mh(\xsp,\SSpc)+\beta$, where $\beta:=\max_{y_i\in\ysp}(\|y_i\|+\er)$.

     Next, let $m=\Lip(\tilde\func)$ for some $\tilde\func$ such that $\Loss(\nDSet,\tilde\func)\le\eps$ (for example, pick an interpolant polynomial) and consider the set $\mathcal{F}_m$.

     We will argue next that $\mathcal{F}_m$ is a compact set in $C(\SSpc)$. To see that, first notice that it is equicontinuous, since all functions in it satisfy the same Lipschitz bound with Lipschitz constant equal to $m$ -- simply pick for every $\tilde\varepsilon>0$ a $\delta:=\tilde \varepsilon/m$, which by the definition of Lipschitz will imply that if $\|x-y\|\leq \delta$ then $\|f(x)-f(y)\|\leq \tilde\varepsilon$ for all $f\in\mathcal{F}_m$. Also, notice that $\mathcal{F}_m$ is equibounded, as shown earlier. Finally, since $\mathcal{F}_m$ is equicontinuous and equibounded, by the Arzel\`a–Ascoli Theorem, it is a compact set.

     Next, for completeness, we will prove that the Lipschitz seminorm is lower semicontinuous with respect to the uniform norm -- i.e. we will prove that for any sequence $\{f_k\}$ of Lipschitz functions converging to a Lipschitz function $f$,
     $$\Lip(f)\leq \liminf_{k\to\infty}\Lip(f_k).$$
     To see this, consider any $x_1,x_2\in\SSpc$ with $x_1\neq x_2$. By the triangle inequality
     \begin{align*}
         \|f(x_1)-f(x_2)\|&\leq \|f(x_1)-f_k(x_1)\|\\&~~~~~+\|f_k(x_1)-f_k(x_2)\|\\&~~~~~+\|f(x_2)-f_k(x_2)\|\\
         &\leq \|f_k(x_1)-f_k(x_2)\|+2\|f-f_k\|_\infty.
     \end{align*}
     
     Dividing everything by $d(x_1,x_2)$ and taking the $\liminf_{k\to\infty}$ results in
     \begin{align*}
         \frac{\|f(x_1)-f(x_2)}{d(x_1,x_2)}\|&\leq \liminf_{k\to\infty}\frac{\|f_k(x_1)-f_k(x_2)\|}{d(x_1,x_2)}\\&\leq\liminf_{k\to\infty} \Lip[f_k]
     \end{align*}
     Finally, since the inequality above holds for all $x_1,x_2\in\SSpc$, in particular it holds for
     \begin{equation*}
         \Lip(f)=\sup_{x_1\neq x_2}\frac{\|f(x_1)-f(x_2)\|}{d(x_1,x_2)}\leq\liminf_{k\to\infty}\Lip[f_k],
     \end{equation*}
     proving lower semi-continuity of the Lipschitz seminorm. Finally, since $\mathcal{F}_m$ is compact and $\Lip(\cdot)$ is lower semi-continuous then it attains its minimum at some $\fopt\in\mathcal{F}_m$, completing the proof of (i) and of the theorem.
\end{proof}

Notice that the result from Theorem \ref{thm:ThrGuarantee1} provides an upper-bound on $\Loss(\fopt,\fgen)$ that is independent of $\fopt$. This is an important distinction because it proves that for problem \eqref{eq:lipregprob}, the solution $\fopt$ can be made arbitrarily close to the generator function $\fgen$ by improving the dataset in terms of its covering radius $h$ and the measurement noise noise $\er$. 

\subsection{Optimization over a subset of Lipschitz functions}

Theorem \ref{thm:ThrGuarantee1} showed how minimizing the Lipschitz constant of the chosen solution gives a priori guarantees for the generalization error of the solution of the problem. However, the problem formulation in \eqref{eq:lipregprob} requires us to minimize over the entire set of Lipschitz functions, which is not viable in practice. Instead, the optimization is often performed over a subset of functions (for example polynomials, neural networks, etc) that does not necessarily contain the function $\fgen$ that generated the data. 

Formally, consider a subset of smooth Lipschitz functions $\FSet(\SSpc)\subseteq C^1(\SSpc)$ that is dense in the $C^{0,1}(\SSpc)$ topology -- i.e. dense w.r.t. the metric $\|\func\|_{\FSet}:=\|\func\|_\infty+\|\func'\|_\infty$, where $\func'$ is the Jacobian of $\func$. Notice that this is not a restrictive assumption, since popular classes of functions used in optimization satisfy it -- such as feedforward neural networks \cite{hornik1991approximation} with $C^1$ activations.

Given this set of functions, let $\fopt$ be any solution of \eqref{eq:lipregprob}.
Then we have the following result:
\begin{theorem}
    \label{thm:ThrGuarantee2}
    For any $\rho>0$, $\fgen\in \LSpace$, and $\nDSet(\fgen)$, let $\FSet(\SSpc)\subset C^1(\SSpc)$ be a set of functions dense in the $C^{0,1}(\SSpc)$ topology, and $\varepsilon\geq\er$ be the tolerance of \eqref{eq:lipregprob}. Then: (i) there always exists an $\fmin\in\FSet$ that is both a feasible solution of \eqref{eq:lipregprob} and satisfies $\Lip[\fmin]\leq \lopt+\rho$; and (ii) for any such $\fmin$ it holds that
    \begin{equation}
        \label{eq:OptProbFSetBound}
        \Loss(\fmin,g)\leq (2\lc{\fgen}+\rho)h+2\varepsilon.
    \end{equation}
\end{theorem}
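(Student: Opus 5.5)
Part (ii) follows directly from Corollary~\ref{cor:UBOrigiForm}; all the real work is in part (i).

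\textbf{Part (ii).} Suppose $\fmin\in\FSet$ is feasible for \eqref{eq:lipregprob}, so $\loss(\nDSet,\fmin)\le\varepsilon$, and $\Lip[\fmin]\le\lopt+\rho$. Since $\fgen$ is itself feasible (because $\varepsilon\ge\er$), the optimal value satisfies $\lopt=\Lip[\fopt]\le\lc{\fgen}$. Corollary~\ref{cor:UBOrigiForm} then gives
\[
\Loss(\fmin,\fgen)\le(\Lip[\fmin]+\lc{\fgen})h+\er+\varepsilon\le(2\lc{\fgen}+\rho)h+2\varepsilon,
\]
using $\er\le\varepsilon$.

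\textbf{Part (i), plan.} By Theorem~\ref{thm:ThrGuarantee1}(i) there is a minimizer $\fopt\in\LSpace$ with $\Lip[\fopt]=\lopt$. The plan is to approximate $\fopt$ by an element of $\FSet$ that
\begin{itemize}
\item[(a)] still satisfies the data constraint $\loss(\nDSet,\cdot)\le\varepsilon$, and
\item[(b)] has Lipschitz constant at most $\lopt+\rho$.
\end{itemize}

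\textbf{Difficulty 1: $\fopt$ is only Lipschitz.} Density in the $C^{0,1}$ norm $\|f\|_\infty+\|f'\|_\infty$ is a statement about $C^1$ functions, but $\fopt$ need not be $C^1$. So I first smooth $\fopt$. Extend it to $\re[\SDim]$ without increasing the Lipschitz constant (McShane extension, applied componentwise, or Kirszbraun for vector values). Then mollify: $f_\delta=\fopt*\phi_\delta$. This gives $f_\delta\in C^1$, $\Lip[f_\delta]\le\lopt$, and $\|f_\delta-\fopt\|_\infty\le\lopt\delta$.

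Next, for the chosen $f_\delta$, density of $\FSet$ yields $\tilde f\in\FSet$ with $\|\tilde f-f_\delta\|_\infty+\|\tilde f'-f_\delta'\|_\infty\le\eta$. On a convex set (or on the convex hull, after extension) this gives $\Lip[\tilde f]\le\lopt+\eta$. Without convexity, I would bound the Lipschitz constant by the sup of the derivative along segments, using that the mollified function is defined on all of $\re[\SDim]$.

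\textbf{Difficulty 2: the constraint may be active.} The construction so far only gives $\loss(\nDSet,\tilde f)\le\varepsilon+\lopt\delta+\eta$, which can exceed $\varepsilon$. I expect this to be the main obstacle, and I would handle it by shrinking toward a strictly feasible function. Take a convex combination
\[
f_\theta=(1-\theta)\fopt+\theta\fgen .
\]
Its residuals satisfy $\|f_\theta(x_i)-y_i\|\le(1-\theta)\varepsilon+\theta\er$, which is strictly below $\varepsilon$ if $\er<\varepsilon$. Its Lipschitz constant is at most $\lopt+\theta(\lc{\fgen}-\lopt)$.

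This fails when $\er=\varepsilon$. In that case I would instead shrink toward a constant $c$ chosen to be the data midrange. For scalar outputs with $\varepsilon>0$, take $c$ with $|c-y_i|$ controlled, use $(1-\theta)\fopt+\theta c$, and argue the constraint margin directly, relying on $\varepsilon>0$ to absorb the error.

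Once strict feasibility with some margin $\mu>0$ is in hand, choose $\delta$ and $\eta$ so that $\lopt\delta+\eta\le\mu$ and the total Lipschitz increase is at most $\rho$. The resulting $\tilde f\in\FSet$ is feasible and satisfies $\Lip[\tilde f]\le\lopt+\rho$, which proves (i).
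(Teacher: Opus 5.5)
Your part (ii) is the paper's argument, and the smoothing-plus-density mechanism in your part (i) is the one the paper uses. Your Difficulty 2 is a real issue that the paper's proof does not address. Its sketch mollifies $g$ itself and only obtains $\|f-g\|_\infty\le\varepsilon$ and $\mathrm{Lip}(f)\le l_g+\rho$. That gives neither the bound $l_{f^*}+\rho$ nor feasibility when $\overline{\varepsilon}>0$, since the residuals are only bounded by $\varepsilon+\overline{\varepsilon}$. The appendix approximates a generic Lipschitz function and never looks at active constraints. Your branch $\overline{\varepsilon}<\varepsilon$, using $(1-\theta)f^*+\theta g$ with margin $\theta(\varepsilon-\overline{\varepsilon})$, is correct and closes the gap in that case.

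The genuine gap is the branch $\overline{\varepsilon}=\varepsilon$: shrinking toward a constant does not create a margin. The residual of $(1-\theta)f^*+\theta c$ at $x_i$ is $(1-\theta)(f^*(x_i)-y_i)+\theta(c-y_i)$. This exceeds $\varepsilon$ at any active index with $f^*(x_i)-y_i=\varepsilon$ and $c-y_i>\varepsilon$. For instance, take $\mathcal{D}=[0,1]$, $g(x)=x$, data $(0,-0.1)$, $(1,1.1)$ and $\varepsilon=\overline{\varepsilon}=0.1$. Every minimizer has $f^*(0)=0$ and $f^*(1)=1$. Then $(1-\theta)f^*+\theta c$ is feasible only if $c\le 0$ and $c\ge 1$, while shrinking toward $g$ leaves both constraints active.

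The missing idea is to shrink toward an \emph{exact interpolant of the noisy data}. For distinct $x_i$, let $h$ be Lipschitz with $h(x_i)=y_i$, for example a McShane or Kirszbraun extension of $x_i\mapsto y_i$. Then $(1-\theta)f^*+\theta h$ has residuals $(1-\theta)(f^*(x_i)-y_i)$, hence margin $\theta\varepsilon$, and Lipschitz constant at most $l_{f^*}+\theta(\mathrm{Lip}(h)-l_{f^*})$. This handles both of your branches at once, and vector outputs too; after it, the rest of your argument goes through.

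This fix needs $\varepsilon>0$, and that cannot be avoided. The statement allows $\varepsilon=\overline{\varepsilon}=0$: take $\mathcal{D}=[0,1]$, $g\equiv\sqrt{2}$, the single sample $x_1=0$, and $L_0$ the polynomials with rational coefficients. These are dense in $C^{0,1}$, but none satisfies $f(0)=\sqrt{2}$, so (i) is false. The same happens when an input is repeated with outputs exactly $2\varepsilon$ apart.

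Finally, your non-convex fallback does not work as written. Density controls $\tilde f'-f_\delta'$ only on $\mathcal{D}$, not along segments leaving it. Like the paper, you need density on the convex hull.
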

\begin{proof}
    First, we assume (i) is true and prove (ii). If (i) is true, then apply Corollary \ref{cor:UBOrigiForm} to get
    \[
    \|\fgen(x)-\fopt(x)\|\;\leq\; (\lc{\fgen}+\lc{\fopt})h+\varepsilon+\er
    \;\leq\; (2\lc{\fgen}+\rho)h+2\varepsilon,
    \]
    proving (ii). 
    
    Next we provide a proof sketch of (i). \cdcsubmission{In~\cite{lipschitz_arxiv}, we discuss an alternative proof, presenting it in full.} \extendedversion{In the appendix we present a full alternate proof of the claim.}
    First, consider the classical method of approximating $L$-Lipschitz functions on $\SSpc$ via convolution with standard mollifiers. Given an $L$-Lipschitz function $g$, this procedure yields a sequence $f_n\in C^\infty$ with $\Lip(f_n)\leq L$ and $\Lip(f_n)\to L$. This follows because if $g$ is Lipschitz in $\re[n]$, then $g\in W^{1,\infty}(\re[n])$ (the Sobolev space of essentially bounded functions with essentially bounded gradient and the topology induced by the $\|\cdot\|_{\FSet}$ norm defined earlier), and mollification preserves the $L^\infty$ bound on the gradient; see e.g. \cite{EvansGariepy2015}. Consequently, the mollified functions are Lipschitz with no larger Lipschitz constant. An explicit formulation of this approximation result, and a substantial generalization to Lipschitz functions on Riemannian manifolds is given in \cite{azagra2007smooth}.

    Through this method, we can find a smooth $f_n$ that is $L$-Lipschitz and such that $\|f-g\|_\infty\leq\eps/2$. By density of $\FSet$ in the $C^{0,1}$ topology, it follows that there exists an $f\in\FSet$ such that $\|f-f_n\|_{\FSet}\leq\tilde\rho$ where $\tilde\rho=\min(\rho,\epsilon/2)$. From here, we recover the statement by noticing that if $\|f-f_n\|_{\FSet}\leq\tilde\rho$ then: $\|f-f_n\|_\infty\leq\tilde\rho\Rightarrow\|g-f\|_\infty\leq\eps$; and $\|f'-f_n'\|_\infty\leq\tilde\rho\Rightarrow\Lip(f)\leq \lc{\fgen}+\rho$.
\end{proof}


This is an important result, because there is no guarantee that \eqref{eq:lipregprob} has a solution if we restrict the search space to $\FSet$. However, even in that case, Theorem \ref{thm:ThrGuarantee2} guarantees that one can get arbitrarily close to a solution, motivating numerical methods.

Furthermore, the resulting inequality \eqref{eq:OptProbFSetBound} provides an important insight for applications: both the measurement noise $\eps$ and the precision of the dataset $h$ affect the global precision of the solution candidate in a linear manner. Arbitrarily increasing the size of the dataset will provide limited gains if the measurement noise is not also addressed, and vice-versa.

\subsection{The problem under random sampling}

Often in practice, one cannot sample in a perfect grid, or even pick which points to sample from. As a consequence, the bound on the covering radius $h$ might not be known with certainty. In this context, the following corollary follows from Theorem \ref{thm:ThrGuarantee2}.

\begin{corollary}
    \label{cor:highprobbnd}
    Let $\FSet\subset C^1(\SSpc)$ be a set of functions dense in $C^{0,1}(\SSpc)$, and let $\nDSet$ be such that $\xsp$ is sampled i.i.d. from $\SSpc$ following some probability distribution $\mu$ over $\SSpc$. For any $\rho>0$ and $\fgen\in \LSpace$, let $\lc{\fgen}:=\Lip(g)$, and $\eps\geq\er$ be the tolerance of \eqref{eq:lipregprob}. Furthermore, let $h>0$ and $\delta\in [0,1)$ be such that $h(\xsp,\SSpc)\leq h$ with probability $(1-\delta)$. Then: (i) there exists an $\fmin\in\FSet$ that is both a feasible solution of \eqref{eq:lipregprob} and satisfies $\Lip(\fmin)\leq l^*+\rho$; and (ii) for any such $\fmin$, with probability $(1-\delta)$ it holds that
    \begin{equation}
        \Loss(\fmin,\fgen)\leq(2\lc{\fgen}+\rho)h+2\eps.
    \end{equation}
\end{corollary}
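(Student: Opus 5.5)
The plan is to reduce the statement to Theorem~\ref{thm:ThrGuarantee2} applied pointwise in the sample space, separating the deterministic part (i) from the probabilistic part (ii).

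For (i), fix any realization of the samples $\xsp=\{x_1,\dots,x_\NSet\}$ and the corresponding noisy outputs. The resulting $\nDSet$ is then an ordinary $\er$-noisy dataset generated by $\fgen$. Theorem~\ref{thm:ThrGuarantee2}(i) makes no assumption on the covering radius of the data. It uses only the density of $\FSet$ in $C^{0,1}(\SSpc)$, the feasibility guaranteed by $\eps\geq\er$, and the existence of a minimizer $\fopt$ of \eqref{eq:lipregprob} with value $l^*$, supplied by Theorem~\ref{thm:ThrGuarantee1}(i). It therefore yields, for that realization, some $\fmin\in\FSet$ with $\loss(\nDSet,\fmin)\le\eps$ and $\Lip(\fmin)\le l^*+\rho$. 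Since this holds for every realization, (i) holds surely, not only with high probability. Here $l^*$ and $\fmin$ are understood as depending on the realized dataset.

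For (ii), let $E:=\{h(\xsp,\SSpc)\le h\}$. By hypothesis, $\mu^{\otimes\NSet}(E)\ge 1-\delta$. On $E$, take any $\fmin$ as in (i). Because $\fgen$ is feasible for \eqref{eq:lipregprob}, we have $l^*\le\lc{\fgen}$, so $\Lip(\fmin)\le \lc{\fgen}+\rho$. Corollary~\ref{cor:UBOrigiForm} applies with covering radius $h(\xsp,\SSpc)$. Its right-hand side is monotone in the covering radius, and $\er\le\eps$, so
\[
\Loss(\fmin,\fgen)\le(\lc{\fgen}+\lc{\fgen}+\rho)\,h(\xsp,\SSpc)+\er+\eps\le(2\lc{\fgen}+\rho)h+2\eps .
\]
This is exactly the bound of Theorem~\ref{thm:ThrGuarantee2}(ii) with the realized radius replaced by its upper bound $h$. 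The inequality therefore holds on $E$, which has probability at least $1-\delta$.

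The main obstacle is bookkeeping rather than analysis. The quantifier ``for any such $\fmin$'' must be read realization-wise, because $\fmin$ depends on the random data. I would also check measurability of $\{h(\xsp,\SSpc)\le h\}$. This follows because $h(\xsp,\SSpc)$ is a continuous function of $(x_1,\dots,x_\NSet)\in\SSpc^\NSet$: it is a supremum over compact $\SSpc$ of a minimum of continuous distances, so it is 1-Lipschitz in the sample points. One further point: Theorem~\ref{thm:ThrGuarantee2} is stated with $h>0$. If a realization has $h(\xsp,\SSpc)=0$, Corollary~\ref{cor:UBOrigiForm} still gives the bound trivially, since Lemma~\ref{lem:UBOrigiAnyPt} holds pointwise.
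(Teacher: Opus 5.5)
Your proposal is correct and follows the paper's route: the paper only remarks that the corollary follows directly from Theorem~\ref{thm:ThrGuarantee2}, and your argument makes that explicit. You apply part (i) realization-wise, then on the event $\{h(\xsp,\SSpc)\le h\}$ you use Corollary~\ref{cor:UBOrigiForm} together with $\Lip(\fmin)\le l^*+\rho\le\lc{\fgen}+\rho$ and monotonicity in the covering radius. Your edge case $h(\xsp,\SSpc)=0$ cannot actually occur, since $\SSpc$ has nonempty interior and $\xsp$ is finite, but the remark is harmless.
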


This Corollary follows directly from Theorem \ref{thm:ThrGuarantee2}, and it provides a direct way of applying the results in a random sampling scenario. For example, let $\SSpc$ be a hypercube in $\re[n]$ and let $\nDSet$ be drawn i.i.d from a uniform distribution on $\SSpc$. 
For this setup, we have the following result:


\begin{theorem}
    \label{thm:HighProbpUniDist}
    Fix any $n\in\nat$, and let $\SSpc=[0,1]^n$ be the unit hypercube in $\re[n]$. Furthermore, let $\FSet(\SSpc)\subset C^1(\SSpc)$ be dense in the $C^{0,1}$ topology. Let $\fgen\in\LSpace$, and $\xsp=\{x_1,\dots,x_N\}$ be i.i.d. uniformly sampled from $\SSpc$, with $\nDSet$ being the resulting dataset obtained from $\fgen$. Then there exist constants $k_1>0$, $k_2>0$ and $N_0\in\nat$ whose value depend only on the dimension $n$, such that for any $\delta\in(0,1)$, $\rho>0$, and $\eps>\er$, and all $N\geq N_0$: (i) there exists a feasible solution $\fmin$ of \eqref{eq:lipregprob} that satisfies $\lc{\fmin}\leq \lc{\fopt}+\rho$; and (ii) for any such solution the following property holds:
    \begin{equation}
        \label{eq:ErBndUniSmpl}
        \mathbb{P}\left[\Loss(\fmin,\fgen)\leq(2\lc{\fgen}+\rho)k_1\left(\frac{\log\left(k_2N/\delta\right)}{N}\right)^{1/n}+2\eps\right]\geq 1-\delta.
    \end{equation}
\end{theorem}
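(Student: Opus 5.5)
The plan is to reduce the statement to Corollary~\ref{cor:highprobbnd}. Part (i) does not involve the sampling at all and follows from Theorem~\ref{thm:ThrGuarantee2}(i). What remains is a high-probability bound on the covering radius of $N$ i.i.d.\ uniform points in $[0,1]^n$: we need constants $k_1,k_2,N_0$, depending only on $n$, such that for $N\ge N_0$ and every $\delta\in(0,1)$,
\[
\mathbb{P}\left[h(\xsp,\SSpc)\le k_1\left(\frac{\log(k_2N/\delta)}{N}\right)^{1/n}\right]\ge 1-\delta .
\]
Once this holds, Corollary~\ref{cor:highprobbnd}(ii) applied with this $h$ gives \eqref{eq:ErBndUniSmpl} directly.

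For the covering-radius bound I will use a grid and a union bound. Fix an integer $m\ge 1$ and split $[0,1]^n$ into $m^n$ closed subcubes of side $1/m$.

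\begin{itemize}
\item If every subcube contains at least one sample, then each $x\in\SSpc$ lies in some subcube together with a sample point, so $\dist(x,\xsp)\le \sqrt n/m$.
\item A fixed subcube has volume $m^{-n}$, so it is empty with probability $(1-m^{-n})^N\le e^{-N m^{-n}}$.
\item By the union bound,
\[
\mathbb{P}[h(\xsp,\SSpc)>\sqrt n/m]\le m^n e^{-Nm^{-n}}.
\]
\end{itemize}

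It remains to choose $m$ so that $m^n e^{-Nm^{-n}}\le\delta$. With $t:=m^{-n}$, this condition is $Nt\ge \log(1/(t\delta))$. I will take
\[
m:=\left\lfloor \left(\frac{N}{2\log(N/\delta)}\right)^{1/n}\right\rfloor,
\qquad\text{so that}\qquad t\ge \frac{2\log(N/\delta)}{N}.
\]
Then $Nt\ge 2\log(N/\delta)$. Also $1/(t\delta)\le m^n/\delta\le N/\delta$, so $\log(1/(t\delta))\le\log(N/\delta)$, and the required inequality holds with room to spare.

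The floor must be controlled so that $m\ge 1$ and $1/m\le 2\left(2\log(N/\delta)/N\right)^{1/n}$. This needs $N/(2\log(N/\delta))\ge 1$. The resulting bound on $h$ is
\[
h\le 2\sqrt n\, 2^{1/n}\left(\frac{\log(N/\delta)}{N}\right)^{1/n},
\]
which suggests $k_1=2^{1+1/n}\sqrt n$ and $k_2=1$ (a larger $k_2$ if convenient).

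I expect the main obstacle to be the uniformity in $\delta$. The condition $N\ge 2\log(N/\delta)$ fails when $\delta$ is extremely small relative to $N$, yet the theorem asks for $N_0$ depending only on $n$. I will handle this regime separately. When $\log(k_2N/\delta)/N$ is large, the right-hand radius $k_1(\log(k_2N/\delta)/N)^{1/n}$ already exceeds $\sqrt n=\mathrm{diam}(\SSpc)\ge h(\xsp,\SSpc)$, provided $k_1\ge\sqrt n$. The event then holds with probability one, and it suffices to pick $k_1$ and $k_2$ so that the two regimes overlap. This case split is where the constants are fixed; the rest is bookkeeping.
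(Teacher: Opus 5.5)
Your argument is correct, and it reaches the covering-radius estimate by a different route from the paper. The paper does not prove the tail bound itself. It invokes Theorem~2.1 of Reznikov--Saff, which gives $\mathbb{P}[h\ge c_1\Phi^{-1}(\alpha\log N/N)]\le c_2N^{1-c_3\alpha}$ for any measure with a lower ball-mass profile $\mu(B_r(x))\ge\Phi(r)$. It then checks $\Phi(r)=v_n(r/2)^n$ on the cube by an orthant argument and tunes $\alpha$ against $\delta$, so its constants $c_1,c_2,c_3,\alpha_0$ stay implicit.

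Your grid-and-union-bound argument is self-contained and gives explicit constants, $k_1=2^{1+1/n}\sqrt n$ and $k_2=1$. The trivial bound $h\le\sqrt n$ handles the small-$\delta$ regime cleanly. Note that the complementary regime only gives $\log(N/\delta)/N>1/2$, so the condition you need is $k_1 2^{-1/n}\ge\sqrt n$ rather than $k_1\ge\sqrt n$; your $k_1$ satisfies it.

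Your explicit route also avoids a bookkeeping slip in the paper. There, the choice $\alpha=\max\{\alpha_0,\bar\alpha\}$ makes the threshold $c_1\Phi^{-1}(\alpha\log N/N)$ larger than the claimed one whenever $\alpha_0>\bar\alpha$. So $k_1$ must absorb an extra factor such as $\max\{c_3\alpha_0,1\}^{1/n}$.

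What the paper's route buys is generality. The same citation covers non-uniform densities, other domains and lower-dimensional supports, which is the setting of its discussion section. Your proof uses the exact equal-volume partition of the cube and would need a covering-number version to extend.

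One detail should be stated explicitly. You use $m^n\le N$ to get $\log(1/(t\delta))\le\log(N/\delta)$, and this requires $2\log(N/\delta)\ge1$, which holds once $N\ge2$. So set $N_0=2$, or take $k_2=2$. For $N=1$ and $\delta$ close to $1$, the bound with $k_2=1$ genuinely fails.
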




\begin{proof}
    Claim (i) follows immediatelly from Theorem \ref{thm:ThrGuarantee2}, as it does not depend on the sampling structure. To prove (ii), we will first apply Theorem 2.1 of \cite{reznikov2016covering} to the uniform distribution over the hypercube. Let $\mu$ be the normalized Lebesgue measure over $\SSpc$ (uniform probability measure), the theorem in question states that for any class-$\mathcal{K}$ $\Phi:\re\to\re[+]$ for which there exists a number $r_0$ such that $\mu(B_r(x))\geq\Phi(r)$ for all $x\in\SSpc$ and $r<r_0$ (where $B_r(x)$ is the closed ball of radius $r$ centered in $x$), then there exist positive constants $c_1,c_2,c_3,$ and $\alpha_0$ such that for any $\alpha>\alpha_0$ we have that
    \begin{equation*}
        \mathbb{P}\left[h(\xsp,\SSpc)\geq c_1\Phi^{-1}\left(\frac{\alpha\log(N)}{N}\right)\right]\leq c_2N^{1-c_3\alpha}.
    \end{equation*}
    We begin by showing that for the uniform distribution over a hypercube, $\Phi(r)=v_n(r/2)^n$ (where $v_n$ is the volume of the unit ball in $\re[n]$) satisfies all conditions required by the theorem. Notice that the given $\Phi(r)$ is the volume of one orthant of the ball of radius $r$ centered in $x$, and so to prove that such $\Phi(r)$ works we will show that for any point $x$ in the hypercube and a sufficiently small $r>0$, the ball of radius $r$ centered in $x$ will always contain at least one orthant strictly inside the hypercube.

    To do that, first define for any $x\in\SSpc$ its sign indicator as the vector $s\in\{-1,-1\}^n$ whose elements satisfy $s_i=1$ if $x_i\leq0.5$ and $s_i=-1$ otherwise. Then, define the closed cone $O_s:=\{u\in\re[n]~|~s_iu_i\geq0 ~i=1,\dots,n\}$ and the orthant of the ball $B_r(x)$ as $B^s_r(x):x+(B_r(0)\cap O_s)$. Clearly $B^s_r(x)\subset B_r(x)$, so we next show that for all $r< 1/2$ $B^s_r(x)\subset\SSpc$ -- and consequently $B^s_r(x)\subset\SSpc\cap B_r(x)$.

    To show that $B^s_r(x)\subset\SSpc$, take any $\overline{x}\in B^s_r(x)$. By definition $\overline x=x+u$ for some $u\in O_s$ satisfying $\|u\|\leq r$ -- consequently, for any given coordinate it also holds that $|u_i|\leq r$. For any given coordinate $i$, if $s_i>0$ then $u_i>0$, $\overline x_i=x_i+u_i\geq x_i\geq0$, and $\overline x_i=x_i+u_i\leq x_i+r\leq 0.5+0.5=1$. Alternatively if $s_i=-1$, then $u_i\leq 0$ which implies $\overline x_i=x_i+u_i\leq x_i\leq 1$ and $\overline x_i=x_i+u_i\geq x_i-r\geq 0$. Thus for all $i$, $0\leq\overline x_i\leq 1$ implying $\overline x\in\SSpc$ and thus $B^s_r(x)\subset\SSpc\cap B_r(x)$.

    Then, and since $\mu$ is the normalized Lebesgue measure in $\SSpc$, and is $0$ for all $x\not\in\SSpc$, it holds that
    \begin{equation*}
        \mu(B_r(x))\geq\mu(B_r^s(x))=v_n\left(\frac{r}{2}\right)^n=:\Phi(r)
    \end{equation*}
    for all $r\leq 1/2$.

    Next, define $\overline\alpha:=(1/c_3)(1+(\log(c_2/\delta)/\log(N))$ and pick $\alpha:=\max\left\{\alpha_0,\overline\alpha\right\}$ and notice that 
    \begin{equation*}
        c_2N^{1-c_3\alpha}\leq c_2N^{1-c_3\overline\alpha}=\delta,
    \end{equation*}
    and that
    \begin{align*}
        \alpha\log(N)&\geq\overline\alpha\log(N)\\
        &=\frac{1}{c_3}\left(1+\frac{\log(\frac{c_2}{\delta})}{\log{N}}\right)\log(N)\\
        &= c_3^{-1}\left(\log(N)+\log\left(\frac{c_2}{\delta}\right)\right)\\
        &= c_3^{-1}\log\left(\frac{c_2N}{\delta}\right)
    \end{align*}
    With the above choice of $\Phi$ and $\alpha$ and through Theorem 2.1 of \cite{reznikov2016covering} we have established that
    \begin{equation*}
        \mathbb{P}\left[h(\xsp,\SSpc)\geq c_1\Phi^{-1}\left(\frac{\alpha\log(N)}{N}\right)\right]\leq c_2N^{1-c_3\alpha}.
    \end{equation*}
    which is equivalent to
    \begin{equation*}
        \mathbb{P}\left[h(\xsp,\SSpc)\geq k_1\left(\frac{\log\left(k_2N/\delta\right)}{N}\right)^{1/n}\right]\leq\delta
    \end{equation*}
    with $k_1=c_1(c_3 v_n)^{-1/n}/2$ and $k_2=c_2$. Finally, we invert the inequalities to get
    \begin{equation*}
        \mathbb{P}\left[h(\xsp,\SSpc)\leq k_1\left(\frac{\log\left(k_2N/\delta\right)}{N}\right)^{1/n}\right]\geq1-\delta.
    \end{equation*}
    The remainder of the proof is simply applying Corollary \ref{cor:highprobbnd} to this high probability bound for the covering radius.
\end{proof}

Observe that Theorem \ref{thm:HighProbpUniDist} concludes that the generalization error of $\fmin$ is independent of the choice of $\fmin$, being a function only of the chosen $\rho,\eps,N$ and $\delta$. Moreover, this error can be made arbitrarily small as the sample size $N\to\infty$, constrained only by the noise level $\er$.

\subsection{Discussion and extension to general measures over $\SSpc$}

In this paper, we focus on uniform approximation of functions, and therefore uniform sampling over $\SSpc$ is a natural strategy for generating training data. If, instead, the objective is to bound the \emph{expected} error with respect to a probability measure $\mu$,
\begin{equation}
\Loss_\mu(f,g):=\int_{\SSpc}|f(x)-g(x)|\,\mathrm{d}\mu(x),
\end{equation}
then it is natural to use the same distribution for both training and testing. In this setting, the key observation in Lemma~\ref{lem:UBOrigiAnyPt} continues to hold for any absolutely continuous measure, as well as for measures supported on lower-dimensional manifolds, since the result is point-wise. Consequently, one obtains bounds on the expected error that depend on the expected distance between a test sample and its nearest training point.
If the training points $x_1,\dots,x_N$ can be selected freely, their optimal placement reduces essentially to a vector quantization problem \cite{gersho2012vector}.
Singular measures (i.e., those supported on lower-dimensional sets) are particularly relevant in light of Theorem~\ref{thm:HighProbpUniDist}, which shows that generalization error suffers from the curse of dimensionality. In many high-dimensional applications (e.g., images or videos), it is common to invoke the “manifold hypothesis,” according to which data lie near a lower-dimensional manifold embedded in the ambient space. We leave a detailed investigation of these extensions to future work.

\subsection{Illustrative Application: Stable Vector Field Estimation}

Consider a stable nonlinear system of the form
\begin{equation}\label{eq:true-dyn}
\dot{x}=g(x)
\end{equation}
where $x(t)\in \R^n$ denotes the state at time $t$, and $g$ is a sufficiently smooth vector field. The objective is to learn a vector field $f$ from a noise-free dataset $\DSet(g)$ over a compact domain $\mathcal{D}$ such that: 1) $f$ approximates $g$ uniformly on $\mathcal{D}$, and 2) the trajectory generated by $f$ is also uniformly close to the true trajectory. 

For the first objective, Theorem~\eqref{thm:ThrGuarantee2} shows that \eqref{eq:lipregprob} with $\eps \ge 0$ can yield a function $f \in L_0$ satisfying
\begin{equation}\label{eq:fit-err}
\|f-g\|_{\infty,\mathcal{D}} \le (2l_g+\rho) h + 2\eps
\end{equation}
where $\rho \ge 0$ and $h = \sup_{x\in\mathcal{D}} h(x,\DSet(g))$.

To address the second objective, we consider system \eqref{eq:true-dyn} subject to an additive perturbation $u(t)\in\R^n$, namely
\begin{equation}\label{eq:pert-dyn}
\dot{x}=g(x)+u.
\end{equation}
Let $x_g(\cdot; x_0, u)$ be the solution of \eqref{eq:pert-dyn} under input $u$ and initial condition $x_0$.
\begin{definition}
    System \eqref{eq:pert-dyn} is said to be incremental bounded input bounded output ($\delta$BIBO) stable if there exist a function $\gamma\in\mathcal{K}$ such that for any $x_0\in\R^n$ and bounded inputs $u^a,u^b$, it holds that 
\begin{equation}\label{eq:issdef}
    \begin{split}
        \|x_g\bigl(\cdot; x_0, u^a\bigr)-x_g\bigl(\cdot; x_0, u^b\bigr)\|_\infty \leq\gamma\bigl(\|u^a-u^b\|_{\infty}\bigr)
    \end{split}
\end{equation}
where $\|u\|_\infty := \sup_{t\geq 0} \|u(t)\|$.
\end{definition}


\begin{proposition}
    Suppose that \eqref{eq:pert-dyn} is $\delta$BIBO stable and $f$ is a solution of \eqref{eq:lipregprob} satisfying \eqref{eq:fit-err}. Let $\mathcal{D}_f\subset \mathcal{D}$ be a forward-invariant set of $\dot{x}=f(x)$. Then, for any $x_0\in\mathcal{D}_f$ we have
    \begin{equation}
        \label{eq:VecFieldEstimation}
        \left\|x_{f}(\cdot; x_0)-x_{g}(\cdot; x_0)\right\|_\infty\leq \delta
    \end{equation}
    where $x_f(\cdot; x_0)$ and $x_g(\cdot; x_0)$ are the solutions of vector fields $f$ and $g$, respectively. 
\end{proposition}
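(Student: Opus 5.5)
The plan is to treat the learned vector field $f$ as the true field $g$ driven by a particular disturbance. Concretely, we rewrite the model dynamics as a perturbed version of \eqref{eq:pert-dyn} and then apply the $\delta$BIBO inequality \eqref{eq:issdef}. The constant $\delta$ in \eqref{eq:VecFieldEstimation} is then read as
\[
\delta:=\gamma\bigl((2l_g+\rho)h+2\eps\bigr).
\]
This is a class-$\mathcal{K}$ function of the fitting bound \eqref{eq:fit-err}, so it vanishes as $h,\eps,\rho\to 0$.

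\emph{Step 1.} Fix $x_0\in\mathcal{D}_f$ and let $x_f(t):=x_f(t;x_0)$ be the solution of $\dot x=f(x)$. We assume this solution exists for all $t\geq 0$, which follows from compactness of $\mathcal{D}$ and forward invariance of $\mathcal{D}_f$. Define the signal
\[
u^a(t):=f(x_f(t))-g(x_f(t)).
\]
Then $\dot x_f=g(x_f)+u^a$. Assuming uniqueness of solutions of \eqref{eq:pert-dyn} (e.g.\ $g$ Lipschitz, which it is since $g\in\LSpace$), this gives $x_f(\cdot;x_0)=x_g(\cdot;x_0,u^a)$.

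\emph{Step 2.} Bound the disturbance. Forward invariance gives $x_f(t)\in\mathcal{D}_f\subset\mathcal{D}$ for all $t\ge0$. Combined with \eqref{eq:fit-err}, this yields
\[
\|u^a\|_\infty\le\sup_{x\in\mathcal{D}}\|f(x)-g(x)\|\le(2l_g+\rho)h+2\eps .
\]
This is the step where the forward-invariance hypothesis is essential: the uniform fitting bound only holds on $\mathcal{D}$, and without invariance the trajectory could leave the region where $f$ is certified.

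\emph{Step 3.} Take $u^b\equiv 0$, so that $x_g(\cdot;x_0,0)=x_g(\cdot;x_0)$. Applying \eqref{eq:issdef} with the same initial condition $x_0$ gives
\[
\|x_f(\cdot;x_0)-x_g(\cdot;x_0)\|_\infty\le\gamma(\|u^a\|_\infty)\le\gamma\bigl((2l_g+\rho)h+2\eps\bigr)=\delta .
\]
The last inequality uses monotonicity of $\gamma\in\mathcal{K}$.

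The main obstacle is technical rather than conceptual. The definition requires $u^a$ to be an admissible bounded input, meaning measurable and bounded. It is in fact continuous in $t$, because $f,g$ are continuous and $x_f$ is continuous. The well-posedness points (global existence and uniqueness of solutions) also have to be handled. Note too that the true trajectory $x_g$ is not required to stay in $\mathcal{D}$; only $x_f$ is, which is exactly what the hypothesis provides.
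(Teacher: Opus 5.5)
Your proposal is correct and is essentially the paper's own proof. Like the paper, you rewrite $\dot x=f(x)$ as \eqref{eq:pert-dyn} with $u=f-g$ along $x_f$, use forward invariance of $\mathcal{D}_f\subset\mathcal{D}$ to bound $\|u\|_\infty$ by \eqref{eq:fit-err}, and apply \eqref{eq:issdef} with $u^a=u$, $u^b=0$; you also state explicitly that $\delta=\gamma\bigl((2l_g+\rho)h+2\eps\bigr)$, which the paper leaves implicit. One small fix to your well-posedness remark: justify uniqueness by the smoothness (local Lipschitzness on $\R^n$) of $g$ assumed in that section, not by $g\in\LSpace$, because competing solutions of \eqref{eq:pert-dyn} may leave $\mathcal{D}$.
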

 
\begin{proof}
First, we rewrite the learned dynamics $\dot{x}=f(x)$ into the form \eqref{eq:pert-dyn} with $u=f-g$. Since $D_f$ is forward-invariant under the vector field $f$, we have $x(t)\in \mathcal{D}$, which further implies $\|u(x(t))\|\leq (2l_g+\rho) h + 2\eps$. Then, \eqref{eq:VecFieldEstimation} follows by taking $u^a=u$ and $u^b=0$ in \eqref{eq:issdef}.
\end{proof}

\section{Neural Network Implementation}

Given the dataset $\DSet(g)$, we define an empirical lower bound for the Lipschitz constant $l_g$ as
\begin{equation}\label{eq:lipdata}
L_{\mathrm{data}} := \max_{1\le i<j\le N}
\frac{\|g(x_i)-g(x_j)\|}{\|x_i-x_j\|},
\end{equation}
where $x_i \neq x_j$. By the Kirszbraun theorem, there exists a Lipschitz function $f^\star:\mathbb{R}^n\to\mathbb{R}^n$ satisfying
\[
\mathrm{Lip}(f^\star)=L_{\mathrm{data}} \; \text{ and }\;
f^\star(x_i)=g(x_i), \ \forall i=1,\ldots,N .
\]
Hence $f^\star$ is an optimal solution to Problem~\eqref{eq:lipregprob} with $\eps=\overline{\eps}=0$. An explicit construction of $f^\star$ is provided in \cite{azagra2021kirszbraun}; however, evaluating $f^\star(x)$ requires computing the gradient of the convex hull of an infimum over the data points, which is computationally demanding.

A natural approach is to learn an approximate solution using a multilayer perceptron (MLP) $y=f_\theta(x)$ defined as
\begin{equation}\label{eq:MLP}
\begin{aligned}
h_0 &= x,\\
h_k &= f_k(h_{k-1}) := \phi(W_k h_{k-1}+b_k),\\
y &= W_{K+1}h_K + b_{K+1},
\end{aligned}
\end{equation}
for $k=1,\ldots,K$, where $h_k\in\R^{n_k}$ denotes the hidden unit, $W_k\in\R^{n_k\times n_{k-1}}$ and $b_k\in\R^{n_k}$ are learnable parameters, and the activation $\phi$ is a monotone and $1$-Lipschitz scalar function applied elementwise. The vector $\theta\in\R^p$ collects all learnable parameters. When $p$ is sufficiently large, $f_\theta$ can interpolate the dataset $\DSet(g)$. However, $f_\theta$ cannot be directly applied to solve  \eqref{eq:lipregprob} since computing $\mathrm{Lip}(f_\theta)$ is NP-hard in general \cite{virmaux2018lipschitz}. To address this issue, we consider Lipschitz-bounded neural networks of the form $y=f_{\theta,\eta}(x)$, referred to as \emph{LipNet}:
\begin{equation}
\begin{aligned}
h_0 &= \sqrt{\eta}\,x,\\
h_k &= \bar{f}_k(h_{k-1}), \quad k=1,\ldots,K,\\
y &= \sqrt{\eta}\,h_K,
\end{aligned}
\end{equation}
where each $\bar{f}_k$ is a $1$-Lipschitz layer. By construction,
\[
\mathrm{Lip}(f_{\theta,\eta})
\le \eta \prod_{k=1}^K \mathrm{Lip}(\bar f_k)
= \eta .
\]
During training, it is convenient to reparameterize $\eta$ as $\eta=e^\psi$ with $\psi\in\R$ as a free parameter.

Various parameterizations for $1$-Lipschitz layers have been proposed; see the recent survey \cite{manchester2026neural} for an overview. One simple example is the spectral parameterization \cite{miyato2018spectral}
\begin{equation}\label{eq:spectral}
\bar f(x) = \phi\!\left(\frac{1}{\bar\sigma(W)}Wx + b\right),
\end{equation}
where $\bar\sigma(W)$ denotes the maximum singular value of $W$. For any $W$ and $b$, this layer has a certified Lipschitz bound of $1$, although the bound may be conservative. Other constructions, such as orthogonal layers \cite{trockman2021orthogonalizing}, SLL layers \cite{araujo2023unified}, and Sandwich layers \cite{wang2023direct}, can yield tighter bounds. We construct optimization formulations of Problem~\eqref{eq:lipregprob} with $\eps=0$ (P1) two proxy formulations with an approximation hyperparameter $\rho$ (P2 and P3):

\begin{itemize}
\item[P1)] \textbf{Lipschitz-bound minimization}
\begin{equation}
\begin{aligned}
\underset{\theta,\eta}{\text{minimize}} \quad &
\max(\eta,L_{\mathrm{data}}) \\
\text{subject to}\quad &
f_{\theta,\eta}(x_i)=g(x_i), \quad i=1,\ldots,N .
\end{aligned}
\end{equation}

\item[P2)] \textbf{Feasibility formulation}
\begin{equation}
\begin{aligned}
\underset{\theta,\eta}{\text{minimize}} \quad & 0 \\
\text{subject to}\quad
& \eta-L_{\mathrm{data}}-\rho \le 0, \\
& f_{\theta,\eta}(x_i)=g(x_i), \quad i=1,\ldots,N ,
\end{aligned}
\end{equation}
where $\rho\ge0$ is a hyperparameter.

\item[P3)] \textbf{Lipschitz-constrained supervised learning}
\begin{equation}
\underset{\theta}{\text{minimize}} \quad
\frac{1}{N}\sum_{i=1}^N
\|f_{\theta,\eta}(x_i)-g(x_i)\|^2 ,
\end{equation}
where $\eta=\rho+L_{\mathrm{data}}$ with  hyperparameter $\rho\ge0$. 
\end{itemize}

Problems P1 and P2 are solved using the augmented Lagrangian method  \cite{bertsekas2014constrained}. The benefit of Problem P3 is that it is a standard supervised learning problem, and can besolved using standard stochastic gradient descent or its variants.

\subsection{Numerical Example}


We consider the following stable nonlinear dynamics
\begin{equation}\label{eq:true-model}
    \dot{x}=g(x):=\begin{bmatrix}
        -x_1 + x_3 \\
        \frac{1}{2}x_1^2-x_1x_3+x_3-x_2 \\
        -x_1 - x_3 
    \end{bmatrix}.
\end{equation}
The training dataset $\DSet(g)$ consists of $27$ grid samples (three points along each state dimension) together with $500$ samples drawn uniformly from the domain $\mathcal{D}=[-2,2]\times[-10,10]\times[-2,2]$. The corresponding empirical lower bound in \eqref{eq:lipdata} is $L_{\mathrm{data}}=4.3282$. The test dataset contains $10{,}000$ samples drawn uniformly from $\mathcal{D}$.

We train the proposed LipNet using Sandwich layers \cite{wang2023direct}. For comparison, we also train a standard multilayer perceptron (MLP) using supervised learning. Both networks consist of six hidden layers with width $128$. The following performance metrics are reported:
\[
\begin{aligned}
\mathrm{MSE} &= \frac{1}{N}\sum_{i=1}^{N}\|f(x_i)-g(x_i)\|^2,\\
\mathrm{Max} &= \max_{1\le i\le N} \|f(x_i)-g(x_i)\|,
\end{aligned}
\]
Both metrics are evaluated on the training and test datasets.

\begin{table}[!tb]
    \centering
    \scalebox{0.8}{
    \begin{tabular}{c|c|cc|cc|cc}
    \toprule
    \midrule
    \multicolumn{2}{c|}{} & \multicolumn{2}{c|}{Training} & \multicolumn{2}{c|}{Test} & \multicolumn{2}{c}{Lipschitz bounds} \\
    \midrule
    Model & H.P. $\rho$ & MSE & Max & MSE & Max & Emp. $\underline{L}$ & Cert. $\overline{L}$\\
    \midrule
    \midrule
    MLP & -- & 1.0e-9\phantom{0} & 9.8e-4 & 2.8e-1 & 2.95 & 76.1062 & -- \\
    \midrule
    LipNet-P1 & -- & 1.0e-10 & 6.7e-5 & 1.2e-3 & 0.34 & \phantom{0}4.6021 & 5.3692\\
    \midrule
    \multirow{3}{*}{LipNet-P2} & $0.1 L_{\text{data}}$ & 3.1e-10 & 2.0e-4 & 1.4e-3 & 0.42 & \phantom{0}4.5207 & 4.7608\\
     & $0.05 L_{\text{data}}$ & 3.4e-10 & 3.2e-4 & 1.9e-3 & 0.46 & \phantom{0}4.4233 & 4.5446\\
     & $0$ & 9.3e-9\phantom{0} & 2.5e-3 & 3.6e-3 & 0.54 & \phantom{0}4.2966 & 4.3282\\
    \midrule
    \multirow{3}{*}{LipNet-P3} & $0.1L_{\text{data}}$ & 4.9e-10 & 5.0e-4 & 2.2e-3 & 0.54 & \phantom{0}4.4519 & 4.7610\\
     & $0.05L_{\text{data}}$ & 6.4e-10 & 6.8e-4 & 2.4e-3 & 0.54 & \phantom{0}4.4111 & 4.5446\\
     & $0$ & 4.3e-7\phantom{0} & 1.4e-2 & 3.3e-3 & 0.61 & \phantom{0}4.3000 & 4.3282\\
    \bottomrule
    \end{tabular}
    }
    \caption{Results for MLP and LipNet with $L_{\mathrm{data}}=4.3282$.}\label{tab:train}
\end{table}

\begin{figure}[!tb]
    \centering
    \includegraphics[width=0.9\linewidth]{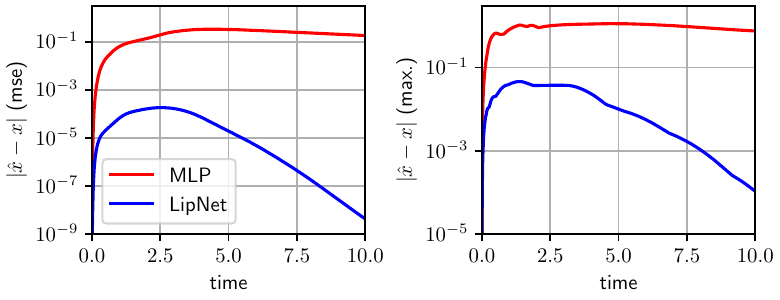}
    \caption{Trajectory errors of different models over 500 test data samples.}
    \label{fig:simu-err}
\end{figure}

\begin{figure}[!tb]
    \centering
    \includegraphics[width=\linewidth]{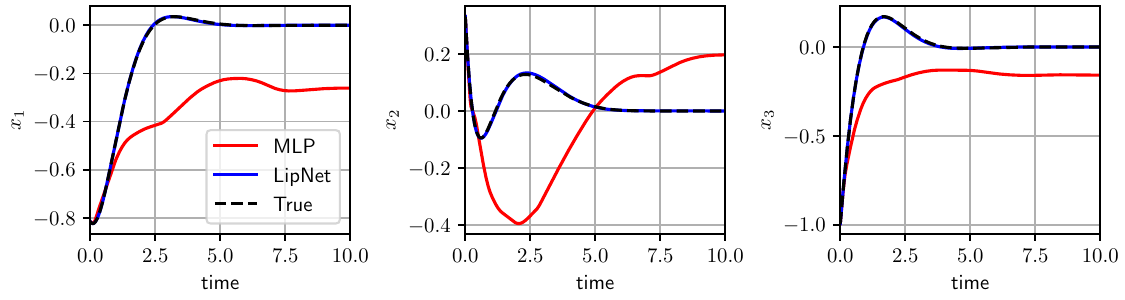}
    \caption{A trajectory sample of learnt models.}
    \label{fig:simu-traj}
\end{figure}

Table~\ref{tab:train} summarizes the training and test performance. Both MLP and LipNet achieve small training errors, indicating that the dataset can be interpolated by both models. However, the test error of the MLP is significantly larger. This behavior is consistent with its large Lipschitz bound, which leads to poor generalization of the learned vector field.

To evaluate the learned dynamics, we simulate the models for $10$ seconds using $500$ test samples as initial conditions. Figure~\ref{fig:simu-err} shows that the MSE simulation error of LipNet remains below $10^{-3}$ and decreases as the trajectories approach the equilibrium. In contrast, the MSE simulation error of MLP quickly increases to  $0.1$ and remains above this level. Figure~\ref{fig:simu-traj} illustrates a representative trajectory. The MLP model converges to a different equilibrium point, indicating that the learned vector field does not preserve the qualitative behavior of the true system.

Figure~\ref{fig:fit} compares the learned vector fields for different values of $x_2$. LipNet preserves the smooth structure across different $x_2$, whereas the MLP produces a non-smooth field whose structure varies with $x_2$.

Table~\ref{tab:mlp-weight-decay} investigates the effect of weight decay, a widely applied regularization technique. Increasing the weight decay coefficient reduces the empirical Lipschitz bound and improves test accuracy. However, this does not yield certified Lipschitz bounds and comes at the cost of significantly worse interpolation performance. Furthermore, choosing the right regularization hyperparameter requires a separate validation data set, whereas Lipschitz-minimal interpolation can be applied with access only to training data.

Among the proposed formulations, the Lipschitz bound minimization approach (P1) achieves the best overall performance. Considering different $1$-Lipschitz layers in Table~\ref{tab:layer}, Sandwich layers produce tightest Lipschitz bounds and best approximation accuracy, closely followed by Orthogonal layers, while performance was significantly worse with the spectral layer as expected.

\begin{figure*}[!tb]
    \centering
    \begin{tabular}{c}
        \includegraphics[width=0.82\linewidth]{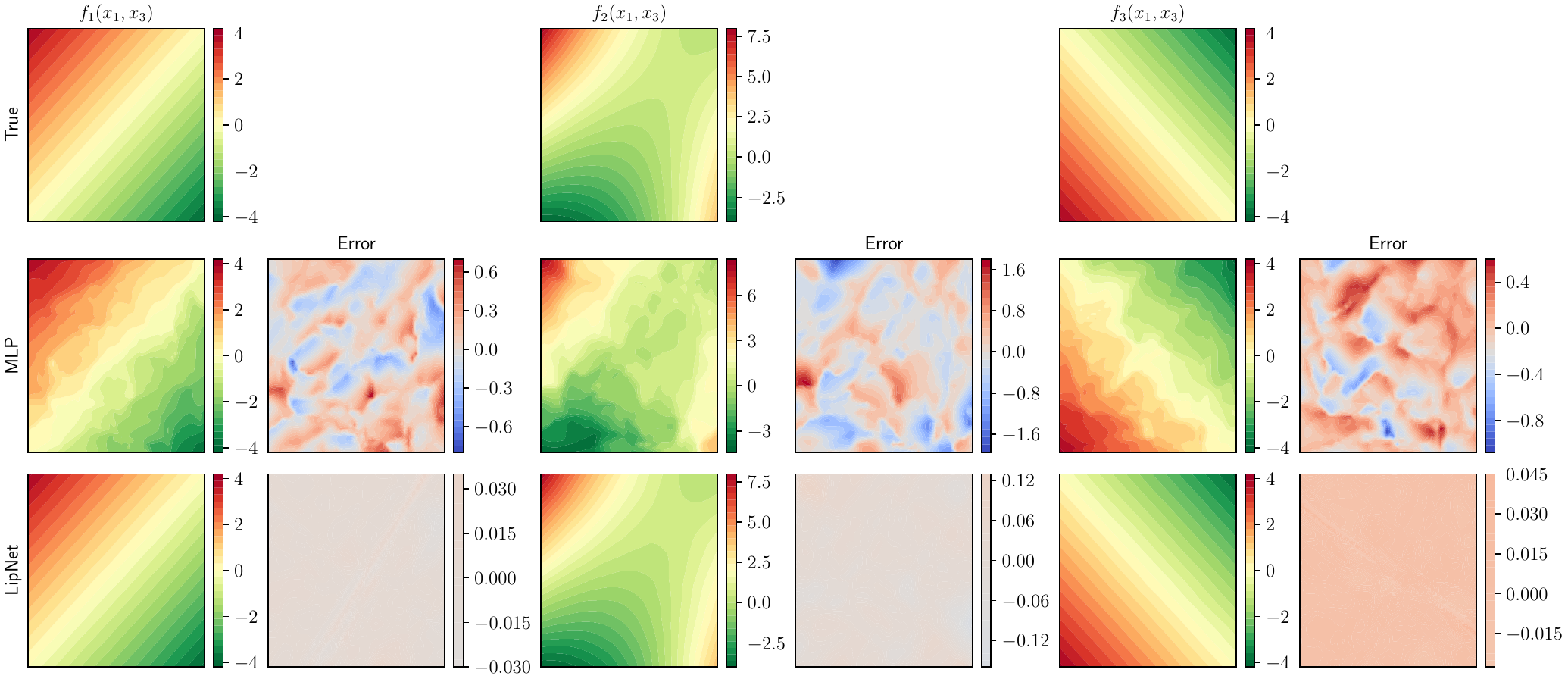} \\
        (a) $x_2=0$ \\
        \includegraphics[width=0.82\linewidth]{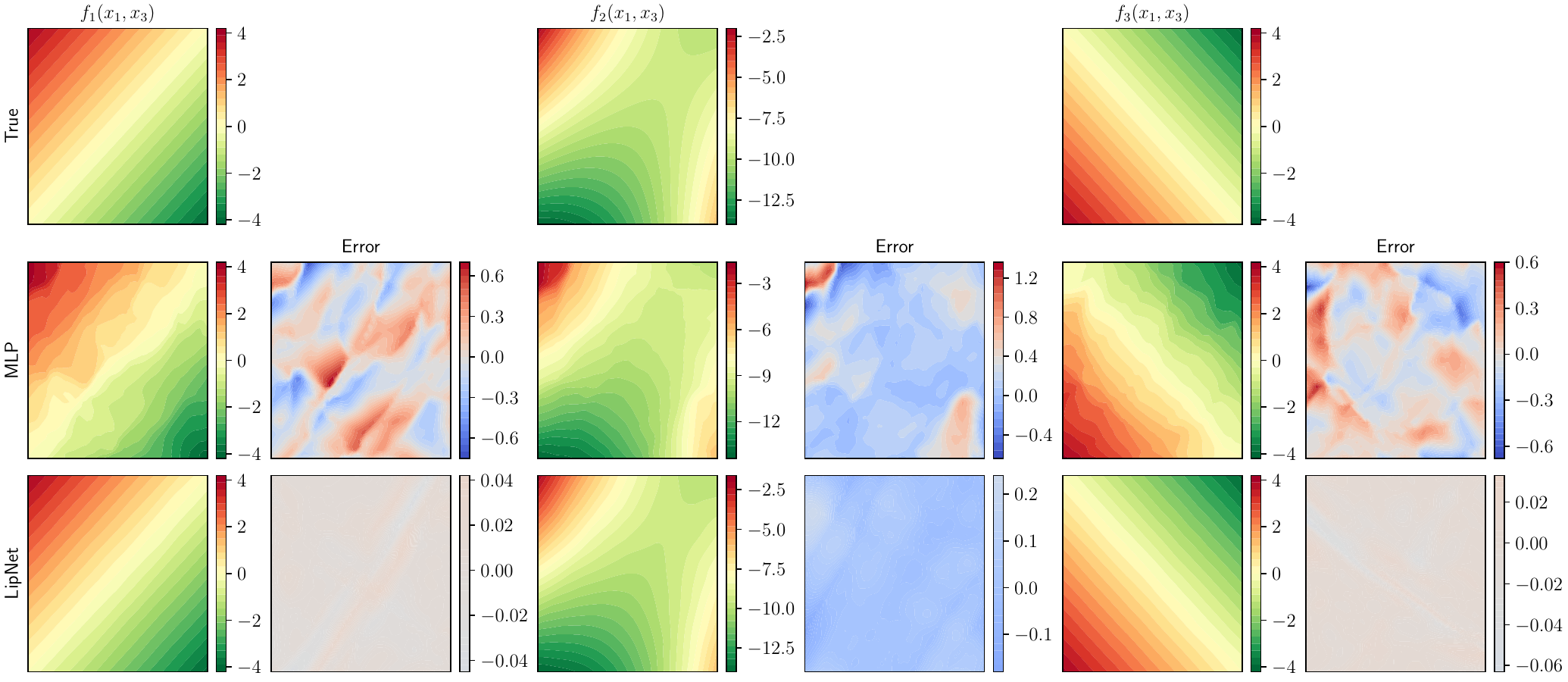} \\
        (b) $x_2=10$
    \end{tabular}
    \caption{The output channel $f_i$ of the learnt models with fixed $x_2$ over the region $(x_1, x_3) \in [-2,2]\times[-2,2]$.}
    \label{fig:fit}
\end{figure*}

\begin{table}[!tb]
    \centering
    \scalebox{0.9}{
    \begin{tabular}{c|cc|cc|c}
    \toprule
    \midrule
    MLP & \multicolumn{2}{c|}{Training} & \multicolumn{2}{c|}{Test} & Lip. bounds \\
    \midrule
    \midrule
    weight decay & MSE & Max. & MSE & Max. & Emp. $\underline{L}$ \\
    \midrule
    0.001 & 9.4e-10 & 8.6e-4 & 2.1e-1 & 2.72 & 54.3703 \\
    0.005 & 1.2e-8 & 3.2e-3 & 1.0e-1 & 1.69 & 24.3221 \\
    0.01\phantom{0}  & 1.5e-8 & 3.8e-3 & 8.2e-2 & 2.13 & 17.4451 \\
    0.05\phantom{0}  & 3.2e-7 & 1.0e-2 & 2.4e-2 & 1.19 & 11.1999 \\
    0.1\phantom{00}  & 1.0e-5 & 3.5e-2 & 1.1e-2 & 0.72 & \phantom{0}7.5957 \\
    0.5\phantom{00}  & 6.6e-5 & 5.6e-2 & 3.9e-3 & 0.38 & \phantom{0}5.6107 \\
    1.0\phantom{00}  & 2.5e-4 & 1.4e-1 & 4.2e-3 & 0.36 &  \phantom{0}4.8761 \\
    5.0\phantom{00}  & 7.6e-1 & 4.26 & 2.42 & 4.33 &  11.4525 \\
    \bottomrule
    \end{tabular}
    }
    \caption{Results for MLP with different weight decays.}\label{tab:mlp-weight-decay}
\end{table}

\begin{table}[!tb]
    \centering
    \scalebox{0.76}{
    \begin{tabular}{c|c|cc|cc|cc}
    \toprule
    \midrule
    \multicolumn{2}{c|}{LipNet} & \multicolumn{2}{c|}{Training} & \multicolumn{2}{c|}{Test} & \multicolumn{2}{c}{Lipschitz bounds} \\
    \midrule
    setup & layer & MSE & Max. & MSE & Max. & Emp. $\underline{L}$ & Cert. $\overline{L}$\\
    \midrule
    \midrule
    \multirow{3}{*}{P1}  & Sandwich & 1.0e-10 & 6.7e-5 & 1.2e-3 & 0.34 & 4.6021 & 5.3692\\
    & Orthogonal & 3.7e-10 & 3.1e-4 & 1.3e-3 & 0.37 & 4.5591 & 5.7713\\
     & Spectral & 4.9e-8\phantom{0} & 1.7e-3 & 7.2e-4 & 0.29 & 5.0955 & 8.0437\\
    \midrule
    \multirow{3}{*}{P2} & Sandwich & 3.1e-10 & 2.0e-4 & 1.4e-3 & 0.42 & 4.5207 & 4.7608\\
     & Orthogonal & 4.8e-8\phantom{0} & 5.1e-4 & 2.9e-3 & 0.47 & 4.3528 & 4.7610\\
     & Spectral & 8.9e-5\phantom{0} & 9.0e-2 & 2.1e-3 & 0.42 & 4.3554 & 4.7614\\
    \midrule
    \multirow{3}{*}{P3} & Sandwich & 4.9e-10 & 5.0e-4 & 2.2e-3 & 0.54 & 4.4519 & 4.7610\\
     & Orthogonal & 9.1e-6\phantom{0} & 5.8e-2 & 3.1e-3 & 0.49 & 4.3251 & 4.7610\\
     & Spectral & 1.0e-4\phantom{0} & 1.8e-1 & 2.1e-3 & 0.41 & 4.2927 & 4.7610\\
    \bottomrule
    \end{tabular}
    }
    \caption{Results for LipNet with different 1-Lipschitz layers. We use $\rho=0.1L_{\text{data}}$ for P2 and P3. }\label{tab:layer}
\end{table}

\section{Conclusions}

In this work, we presented a theoretical framework for Lipschitz-minimal interpolation and established explicit generalization guarantees for functions that approximately interpolate finite datasets while minimizing their Lipschitz constant. The results show that such interpolants admit uniform error bounds that depend linearly on the sampling density through the covering radius, and the noise level. Importantly, these guarantees are independent of the specific choice of interpolating function and instead depend only on structural properties of the problem, highlighting the role of smoothness as an implicit regularizer in overparameterized models.

We further showed that these guarantees extend to practical model classes, such as neural networks, under mild density assumptions in the 
$C^{0,1}$ topology. The proposed Lipschitz-constrained formulations can be implemented using Lipschitz-bounded architectures and standard optimization techniques, enabling tractable approximations of Lipschitz-minimal solutions. Numerical experiments demonstrate that enforcing Lipschitz bounds significantly improves generalization and preserves qualitative system behavior, particularly in dynamical systems settings.

Several directions for future work remain. These include extending the analysis to non-uniform sampling distributions and data supported on lower-dimensional manifolds, improving computational methods for Lipschitz-constrained optimization, and tightening the dependence of the bounds on dimension. More broadly, the framework suggests a principled approach to incorporating smoothness priors in learning systems, with potential applications in robust control, system identification, and safety-critical machine learning.


\bibliographystyle{ieeetran}
\bibliography{references}
\extendedversion{
\begin{appendix}
    \subsection{Alternate proof for Theorem \ref{thm:ThrGuarantee2}(i)}
    In this appendix we will present an alternative proof of claim (i) of Theorem \ref{thm:ThrGuarantee2}. As a reminder, the claim is as follows:
    \begin{claim}
        For any $\rho>0$, $\fgen\in\LSpace$, and $\nDSet(\fgen)$, let $\FSet(\SSpc)\subset C^1(\SSpc)$ be a set of functions dense in the $C^{0,1}(\SSpc)$ topology, and $\eps\geq\er$ be the tolerance of \eqref{eq:lipregprob}. Further, let $\fopt$ be any solution of \eqref{eq:lipregprob}. Then there always exist an $\fmin\in\FSet$ that is both a feasible solution of \eqref{eq:lipregprob} and that satisfies $\lc{\fmin}\leq\lc{\fopt}+\rho$.
    \end{claim}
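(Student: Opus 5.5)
The plan is to start from a Lipschitz-minimal solution $\fopt$ of \eqref{eq:lipregprob}, whose existence is guaranteed by Theorem~\ref{thm:ThrGuarantee1}(i). I would replace the mollification step by a regularize-then-correct construction that leaves room, or slack, in the data constraint. The difficulty throughout is that approximating $\fopt$ in the $C^{0,1}$ sense moves the values at the data points slightly, and this can break the constraint $\loss(\nDSet,\cdot)\le\eps$ whenever $\fopt$ meets it with equality. I will therefore first build a \emph{strictly} feasible target and only then use density of $\FSet$. I treat the case $\eps>0$, which is the setting of Theorem~\ref{thm:HighProbpUniDist}. For $\eps=0$ one needs exact interpolation inside $\FSet$, and density alone does not give it; I flag this as the main obstacle below.

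\emph{Step 1 (smooth regularization without increasing the Lipschitz constant).} Extend $\fopt$ to $\re[\SDim]$ with the same constant $\lc{\fopt}$ using the McShane extension. Then apply Lasry--Lions regularization (a sup-inf convolution with quadratics). This gives, for every $\delta>0$, a $C^{1}$ (indeed $C^{1,1}$) function $\varphi$ with $\Lip(\varphi)\le\lc{\fopt}$ and $\|\varphi-\fopt\|_{\infty,\SSpc}\le\delta$. Consequently $\|\varphi'\|_\infty\le\lc{\fopt}$, and this is the bound that will survive $C^{0,1}$ perturbations.

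\emph{Step 2 (strictly feasible correction).} Let $d>0$ be the minimal separation of the distinct points in $\xsp$. Fix a smooth bump $b$ with $b(0)=1$, $0\le b\le1$, support in $B_{1/2}(0)$ and $\|b'\|_\infty\le C_n$. Set $b_i(x):=b((x-x_i)/d)$; these bumps have disjoint supports and $\|b_i'\|_\infty\le C_n/d$. Pick $s\in(0,1)$ and targets $z_i:=\fopt(x_i)+s\,(y_i-\fopt(x_i))$, so that $\|y_i-z_i\|\le(1-s)\eps$. Define
\[
\psi:=\varphi+\sum_{i=1}^N\bigl(z_i-\varphi(x_i)\bigr)b_i .
\]
Then $\psi(x_i)=z_i$, so $\loss(\nDSet,\psi)\le(1-s)\eps$, which leaves a slack of $s\eps$. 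Each coefficient satisfies $\|z_i-\varphi(x_i)\|\le\delta+s\eps$, and since the supports are disjoint, $\|\psi'\|_\infty\le\lc{\fopt}+(\delta+s\eps)C_n/d$. Choosing $s$ and $\delta$ small makes this at most $\lc{\fopt}+\rho/2$.

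\emph{Step 3 (density).} By density of $\FSet$ in $C^{0,1}(\SSpc)$, choose $\fmin\in\FSet$ with $\|\fmin-\psi\|_\FSet\le\eta:=\min(\rho/2,\,s\eps)$. Then $\loss(\nDSet,\fmin)\le(1-s)\eps+\eta\le\eps$, so $\fmin$ is feasible. Also $\|\fmin'\|_\infty\le\lc{\fopt}+\rho$, which gives $\lc{\fmin}\le\lc{\fopt}+\rho$ once the gradient bound is converted to a Lipschitz bound. That conversion is immediate if $\SSpc$ is convex (for example the cube). In general I would pass through the extension of $\psi$ to a convex neighbourhood and approximate there.

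\emph{Main obstacle.} I expect Step 2's slack bookkeeping, and above all the degenerate case $\eps=0$, to be the hardest part. With $\eps=0$ there is no slack, and I would try to argue that the finite-dimensional evaluation map $f\mapsto(f(x_i))_i$ restricted to $\FSet$ near $\psi$ is open. For neural networks this could be done via output-layer bias and weight adjustments, but that uses structure beyond density. The second delicate point is verifying that Lasry--Lions regularization (or an alternative, such as mollification restricted to a slightly shrunken domain) preserves the Lipschitz constant exactly on the non-convex parts of $\SSpc$.
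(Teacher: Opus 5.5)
Your argument is correct for $\varepsilon>0$, and it takes a different and more complete route than the paper's appendix proof.

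The paper also starts from the McShane extension. It then smooths by a \emph{single} Moreau--Yosida inf-convolution and shows by a translation argument that the Lipschitz constant does not increase and that the uniform error is at most $\lambda l_g^2/2$. It then invokes density of $L_0$ and stops once it has $f_m\to g$ uniformly and $\mathrm{Lip}(f_m)\to l_g$ (to be read with $f^*$ in place of $g$). Nonconvex $\mathcal{D}$ is handled by passing to the convex hull.

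Your version gains two things. First, a single inf-convolution of a nonconvex Lipschitz function is only semiconcave, not $C^1$: for $-|x|$ on $\mathbb{R}$ it returns $-|x|-\lambda/2$. Your sup--inf Lasry--Lions step, or simply mollifying the McShane extension on all of $\mathbb{R}^n$, is what actually produces a $C^1$ function to which density applies. Second, and more importantly, the paper never verifies $\ell(\mathbb{D},f_m)\le\varepsilon$. Uniform convergence cannot give this when a constraint is active at $f^*$, and one always is when $l_{f^*}>0$; otherwise $(1-t)f^*$ would be feasible with a smaller constant. Your bump correction, which buys slack $s\varepsilon$ at the data at a gradient cost $(\delta+s\varepsilon)C_n/d$, is exactly the missing step. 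It uses only feasibility of $f^*$, not optimality. The paper's argument is shorter, but yours is the one that actually delivers a feasible element of $L_0$.

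On your flagged obstacles. The case $\varepsilon=0$ cannot be rescued from density alone, because the claim is false there. Polynomials with rational coefficients are dense in $C^1([0,1]^n)$ for $\|f\|_\infty+\|f'\|_\infty$. Yet with $g\equiv\sqrt{2}$, a sample at $x_1=0$, and $\varepsilon=\overline{\varepsilon}=0$, none of them is feasible. So your restriction to $\varepsilon>0$ is necessary, and the evaluation-map idea really belongs to an additional structural hypothesis.

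The same pinning occurs for $\varepsilon=\overline{\varepsilon}>0$ if two samples share an input with labels $2\varepsilon$ apart, so Step 2 genuinely needs distinct $x_i$. When $\varepsilon>\overline{\varepsilon}$, a common target such as $(1-s)f^*(x)+s\,g(x)$ at a repeated input restores slack $s(\varepsilon-\overline{\varepsilon})$.

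Your worry about Lasry--Lions on nonconvex parts of $\mathcal{D}$ is unfounded. After McShane the function is $l_{f^*}$-Lipschitz on all of $\mathbb{R}^n$, and regularizations taken on $\mathbb{R}^n$ keep that bound globally. The only real nonconvexity issue is in Step 3: turning $\|\underline{f}'\|_{\infty,\mathcal{D}}\le l_{f^*}+\rho$ into a Euclidean Lipschitz bound needs density of $L_0$ on a convex superset of $\mathcal{D}$. The paper's convex-hull remark makes the same implicit strengthening.
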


    To begin proving the claim, define the McShane extension $\tilde\fgen:\re[n]\to\re$, of any $\fgen\in \LSpace$, as
    \begin{equation*}
        \tilde \fgen(x) := \inf_{y\in\SSpc}(\fgen(y)+\lc{\fgen}\|x-y\|)
    \end{equation*}
    and its Moreau-Yosida inf-convolution $\fgen_\lambda$ as
    \begin{equation*}
        \fgen_\lambda:=\inf_{y\in\re[n]}\left(\tilde{\fgen}(y)+\frac{1}{2\lambda}\|x-y\|^2\right)
    \end{equation*}
    for any $\lambda\in\re$. Notice that by construction $\tilde{\fgen}$ is Lipschitz with $\lc{\tilde{\fgen}}\leq \lc{\fgen}$ and $\fgen_\lambda$ is finite and $C^{1,1}(\re[n])$ (see, \cite{BauschkeCombettes,LasryLions}). With these definitions, the proof will go as follows: we will build a sequence of smooth functions that converge uniformly to any given Lipschitz continuous function, using the Moreau-Yosida convolution; then we will show that the Lipschitz constants of the functions in such a sequence converge to the Lipschitz constant of the function that they are approximating; and finally we will use the fact that every element in the sequence can be approximated by its own sequence of functions in $\FSet$, due to assumed density of $\FSet$ in $C^{0,1}(\SSpc)$, thus completing the proof.
    
    The first step towards proving (i) is, then, to show that $\lim_{\lambda\downarrow0}\fgen_{\lambda} = \tilde{\fgen}$. To see that, notice that by Lipschitzness
    \begin{align*}
        &\tilde{\fgen}(y)\geq \tilde{\fgen}(x)-\lc{\fgen}\|y-x\|\\
        &\tilde{\fgen}(y)+\frac{1}{2\lambda}\|x-y\|^2\geq \tilde{\fgen}(x)+\left(\frac{1}{2\lambda}\|x-y\|^2-\lc{\fgen}\|x-y\|\right)\\
        &\tilde{\fgen}(y)+\frac{1}{2\lambda}\|x-y\|^2\geq \tilde{\fgen}(x)-\frac{\lambda \lc{\fgen}^2}{2}
    \end{align*}
    since $\min_{r>0}(1/2\lambda)r^2-\lc{\fgen}r=-(\lambda \lc{\fgen}^2)/2$. Taking the infimum over $y$ of the inequality above results in
    \begin{align*}
        &\fgen_{\lambda}(x)\geq \tilde{\fgen}(x)-\frac{\lambda \lc{\fgen}^2}{2}\\
        &\tilde{\fgen}(x)-\fgen_{\lambda}(x)\leq \frac{\lambda \lc{\fgen}^2}{2}
    \end{align*}
    proving that $\lim_{\lambda\downarrow0}\fgen_{\lambda} = \tilde{\fgen}$.

    Next we will show that over the compact set $\SSpc$, $\lim_{\lambda\downarrow 0}\lc{\fgen_{\lambda}}=\lc{\fgen}$. Since the Lipschitz semi-norm is lower semi-continuous in the uniform topology, we have that
    \begin{equation*}
        \lc{\fgen}\leq\liminf_{\lambda\downarrow0}\lc{\fgen_{\lambda}},
    \end{equation*}
    thus we need to prove the opposite inequality -- in particular we will next show that in $\SSpc$ it holds that $\lc{\fgen_{\lambda}}\leq \lc{\fgen}$ for all $\lambda>0$ -- i.e. that $\fgen_{\lambda}$ is $\lc{\fgen}$-Lipschitz. To do that, pick any two $x_1,x_2\in\re[n]$. Define $d:=x_2-x_1$ and for any $y\in\re[n]$, $z:=y+d$. Notice that $x_2-z=x_1-y$; we will leverage this inequality to prove the claim. Write
    \begin{align*}
        \fgen_\lambda(x_2)&\leq \tilde\fgen(z)+\frac{1}{2\lambda}\|x_2-z\|^2\\
        &=\tilde\fgen(y+d)+\frac{1}{2\lambda}\|x_1-y\|^2\\
        &\leq \tilde\fgen(y)+\frac{1}{2\lambda}\|x_1-y\|^2+\lc{\fgen}\|d\|,
    \end{align*}
    where the last inequality follows from the fact that $\tilde\fgen$ is $\lc{\fgen}$-Lipschitz. Taking the infimum of the expression above over $y$ gives
    \begin{equation*}
        \fgen_\lambda(x_2)\leq \fgen_\lambda(x_1)+\lc{\fgen}\|x_2-x_1\|,
    \end{equation*}
    and by symmetry, the same inequality follows with $x_1$ and $x_2$ reversed. Therefore,
    \begin{equation*}
        |\fgen_\lambda(x_2)-\fgen_\lambda(x_1)|\leq \lc{\fgen}\|x_2-x_1\|,
    \end{equation*}
    proving the claim. We have, therefore, established that 
    \begin{equation*}
        \lc{\fgen}\leq\liminf_{\lambda\downarrow0}\lc{g_\lambda}
    \end{equation*}
    and that
    \begin{equation*}
        \lc{g_\lambda}\leq\lc{\fgen},
    \end{equation*}
    for all $\lambda> 0$. Therefore, by taking the limit supremum of the inequality above with $\lambda\downarrow 0$ we show that the limit exists and converges fo $\lc{\fgen}$. 

    We are now ready to prove the main result. First pick any sequence $\lambda_m\downarrow0$ and for simplicity of notation define $h_m:=g_{\lambda_m}$. So far we have established that $\|h_m-g\|_{\infty}\to 0$ and that $\lc{h_m}\to\lc{g}$. We will then show that for any $m$, $h_m$ can be similarly approximated by a sequence of functions in $\FSet$.

    Since $h_m$ is smooth, and since $\FSet$ is dense in the $C^{0,1}$ topology, then for any sequence $\delta_m\downarrow 0$ there exists an $f_m\in\FSet$ such that 
    \begin{equation*}
        \|f_m-h_m\|_{\infty,\SSpc}+\|f'_m-h'_m\|_{\infty,\SSpc}\leq\delta_m
    \end{equation*}
    which implies in particular that $\|f_m-h_m\|_{\infty,\SSpc}\leq \delta_m$ and $\|f'_m-h'_m\|_{\infty,\SSpc}\leq\delta_m$. From here we can show convergence of the sequence $f_m$ to $\fgen$ as
    \begin{align*}
        \|f_m-\fgen\|_{\infty,\SSpc}&\leq\|f_m-h_m\|_{\infty,\SSpc}+\|h_m-\fgen\|_{\infty,\SSpc}\\&\leq\delta_m+\|h_m-g\|_{\infty,\SSpc}
    \end{align*}
    which after taking a limit for $m\to\infty$ results in $\|f_m-\fgen\|_{\infty,\SSpc}= 0$. For convergence of the Lipschitz constant notice that
    %
    %
    \begin{equation*}
        |\|f_m'\|_{\infty,\SSpc}-\|h'_m\|_{\infty,\SSpc}\| |\leq\|f'_m-h_m'\|_{\infty,\SSpc}\leq\delta_m,
    \end{equation*}
    and, thus
    \begin{equation*}
        \lc{h_m}-\delta_m\leq\lc{f_m}\leq\lc{h_m}+\delta_m,
    \end{equation*}
    from denseness in the $C^{0,1}$ norm, which similarly results in $\lc{f_m}=\lc{\fgen}$ after taking the limit for $m\to\infty$. Finally, if $\SSpc$ is not convex, apply the same argument for the convex hull of $\SSpc$, which allows for the same conclusion to follow (since the convex hull of $\SSpc$ contains $\SSpc$). At this point, the proof is complete.\hfill $\blacksquare$

\end{appendix}}

\end{document}

%% file: MathDefs.tex
\DeclareMathOperator*{\argmin}{arg\,min}

\newcommand{\na}{\mathbb{N}}
\newcommand{\ir}{\mathbb{Z}}

\newcommand{\re}[1][\empty]{\mathbb{R}^{#1}}
\newcommand{\rep}[1][\empty]{\re[#1]_{+}}
\newcommand{\repp}[1][\empty]{\re[#1]_{++}}
\newcommand{\nat}{\mathbb{N}}

\newcommand{\dist}{\mathbf{d}}

\newcommand{\er}{\overline\varepsilon}

\newcommand{\DSet}[1][\empty]{\ifx#1\empty \mathbb{D}_{\NSet} \else \mathbb{D}_{\NSet}(#1)\fi}
\newcommand{\nDSet}[1][\empty]{\ifx#1\empty \mathbb{D}_{\NSet}^{\er} \else \mathbb{D}_{\NSet}^{\er}(#1)\fi}
\newcommand{\NSet}{N}
\newcommand{\xsp}[1][\empty]{\ifx#1\empty \mathbf{x}\else x_{#1}\fi}
\newcommand{\ysp}[1][\empty]{\ifx#1\empty \mathbf{y}\else y_{#1}\fi}

\newcommand{\SSpc}{\mathcal{D}}

\newcommand{\SDim}{n}

\newcommand{\LSpace}{L(\SSpc)}
\newcommand{\lc}[1]{l_{#1}}

\newcommand{\loss}[1][\empty]{\ifx#1\empty\ell\else \ell(#1)\fi}
\newcommand{\Loss}{\mathcal{L}}
\newcommand{\Lip}[1][\empty]{\ifx#1\empty \mbox{Lip}\else\mbox{Lip}(#1)\fi}

\newcommand{\lopt}{\lc{\fopt}}

\newcommand{\func}[1][\empty]{\ifx#1\empty f \else f(#1)\fi}
\newcommand{\funcg}[1][\empty]{\ifx#1\empty g \else f(#1)\fi}
\newcommand{\fgen}[1][\empty]{\ifx#1\empty g \else g(#1)\fi}
\newcommand{\fopt}[1][\empty]{\ifx#1\empty f^* \else f^*(#1)\fi}
\newcommand{\fmin}[1][\empty]{\ifx#1\empty \underline{f} \else \underline{f}(#1)\fi}
\newcommand{\FSet}{L_0}

\newcommand{\eps}{\varepsilon}

\newcommand{\PPt}{\mathbf{W}}
\newcommand{\POpt}{\PPt^*}

\newcommand{\NNet}[1][\empty]{\ifx#1\empty f_{\PPt} \else f_{\PPt}(#1)\fi}
\newcommand{\NOpt}[1][\empty]{\ifx#1\empty f_{\POpt} \else f_{\POpt}(#1)\fi}